\newtheorem{theorem}{Theorem}[section]
\newtheorem{lemma}[theorem]{Lemma}
\newtheorem{remark}[theorem]{Remark}
\newtheorem{definition}[theorem]{Definition}
\numberwithin{equation}{section}
\newcommand{\sn}[1]{{\rm sn}\left(#1\right)}
\newcommand{\cn}[1]{{\rm cn}\left(#1\right)}
\newcommand{\dn}[1]{{\rm dn}\left(#1\right)}
\newcommand{\cd}[1]{{\rm cd}\left(#1\right)}
\newcommand{\pP}[2]{P_{#1}\left(#2\right)}
\newcommand{\pQ}[2]{Q_{#1}\left(#2\right)}
\newcommand{\gG}[1]{G_{#1}\,}
\newcommand{\bfrac}[2]{\left(\frac{#1}{#2}\right)}
\newcommand{\bcfrac}[2]{\left(\cfrac{#1}{#2}\right)}
\newcommand{\lrangle}[1]{\langle#1\rangle}
\newcommand{\ox}{\overline{x}}
\newcommand{\oy}{\overline{y}}
\newcommand{\tx}{\tilde{x}}
\newcommand{\ty}{\tilde{y}}
\newcommand{\io}{\iota}
\newcommand{\bbP}{\mathbb{P}}
\newcommand{\bbZ}{\mathbb{Z}}
\newcommand{\bbR}{\mathbb{R}}
\newcommand{\PicX}{{\rm Pic}(X)}
\newcommand{\WEe}{W\big(E_8^{(1)}\big)}
\newcommand{\tWEe}{\widetilde{W}\big(E_8^{(1)}\big)}
\newcommand{\aaa}[1]{a_{#1}}
\newcommand{\cc}[1]{c_{#1}}
\newcommand{\EE}[1]{E_{#1}}
\newcommand{\RJ}[1]{R_{J,#1}}
\newcommand{\gae}{\gamma_{\rm e}}
\newcommand{\gao}{\gamma_{\rm o}}
\newcommand{\sz}{{\rm sz}}
\newcommand{\cz}{{\rm cz}}
\newcommand{\dz}{{\rm dz}}
\newcommand{\hsz}{{\rm \widehat{sz}}}
\newcommand{\hcz}{{\rm \widehat{cz}}}
\newcommand{\hdz}{{\rm \widehat{dz}}}
\newcommand{\sge}{{\rm sg}_{\rm e}}
\newcommand{\cge}{{\rm cg}_{\rm e}}
\newcommand{\dge}{{\rm dg}_{\rm e}}
\newcommand{\sgo}{{\rm sg}_{\rm o}}
\newcommand{\cgo}{{\rm cg}_{\rm o}}
\newcommand{\dgo}{{\rm dg}_{\rm o}}
\newcommand{\al}{\alpha}
\newcommand{\de}{\delta}
\newcommand{\ga}{\gamma}
\newcommand{\si}{\sigma}
\newcommand{\la}{\lambda}
\newcommand{\ep}{\bm{\epsilon}}
\newcommand{\ka}{\kappa}
\newcommand{\tW}{\widetilde{W}}
\newcommand{\oc}[1]{{#1}^{\vee}}
\newcommand{\utilde}[1]{\vrule depth 0pt width 0pt%
{\smash{{\mathop{#1}\limits_{\displaystyle\tilde{}}}}}}
\newcommand{\wutilde}[1]{\vrule depth 0pt width 0pt%
{\raise0.8pt\hbox{$\smash{{\mathop{#1} \limits_{\displaystyle\widetilde{}}}}$}}}
\long\def\@makecaption#1#2{
 \vskip 10pt
 \setbox\@tempboxa\hbox{#1. #2}
 \ifdim \wd\@tempboxa >\hsize #1. #2\par \else \hbox
to\hsize{\hfil\box\@tempboxa\hfil}
 \fi}
\newcommand\Pfour{\textrm{P}_{\textrm{IV}}}
\newcommand{\iii}{{\rm i}}
\DeclareFontFamily{U}{mathx}{\hyphenchar\font45}
\DeclareFontShape{U}{mathx}{m}{n}{
      <5> <6> <7> <8> <9> <10>
      <10.95> <12> <14.4> <17.28> <20.74> <24.88>
      mathx10
      }{}
\DeclareSymbolFont{mathx}{U}{mathx}{m}{n}
\DeclareMathAccent{\widecheck}{0}{mathx}{"71}
\DeclareMathAccent{\wideparen}{0}{mathx}{"75}
\newdimen\stockheight
\newdimen\stockwidth
\begin{document}
\title{A review of elliptic difference Painlev\'e equations}
\author{Nalini Joshi}
\address{School of Mathematics and Statistics, The University of Sydney, New South Wales 2006, Australia.}
\email{nalini.joshi@sydney.edu.au}
\author{Nobutaka Nakazono}
\address{Department of Physics and Mathematics, Aoyama Gakuin University, Sagamihara, Kanagawa 252-5258, Japan.}
\email{nobua.n1222@gmail.com}

\begin{abstract}
  \noindent
Discrete Painlev\'e equations are nonlinear, nonautonomous difference equations of second-order. They have coefficients that are explicit functions of the independent variable $n$ and there are three different types of equations according to whether the coefficient functions are linear, exponential or elliptic functions of $n$. In this paper, we focus on the elliptic type and give a review of the construction of such equations on the $E_8$ lattice. The first such construction was given by Sakai \cite{SakaiH2001:MR1882403}. We focus on recent developments giving rise to more examples of elliptic discrete Painlev\'e equations. 
\end{abstract}

\maketitle
\setcounter{tocdepth}{1}
\section{Introduction}
\label{section:introduction}
Discrete Painlev\'e equations are nonlinear integrable ordinary difference equations of second order. They have a long history (see \S\ref{s:background}), but it is only in the past two decades that striking developments have led to their recognition as one of the most important classes of equations in the theory of integrable systems \cite{GR2004:MR2087743,KNY2017:MR3609039,HJN2016:MR3587455}. 

Almost all of the currently known collection of discrete Painlev\'e equations were derived by Grammaticos, Ramani and collaborators \cite{GR2004:MR2087743}, and recognized as having fundamental properties that parallel those of the Painlev\'e equations, such as Lax pairs, B\"acklund transformations and special solutions. 
Sakai \cite{SakaiH2001:MR1882403} unified these discrete Painlev\'e equations and also discovered a new equation whose coefficients are iterated on elliptic curves.  Sakai's equation is an elliptic difference Painlev\'e equation.

Sakai's unification is based on a deep geometric theory shared by all the discrete Painlev\'e equations, first described by Okamoto \cite{OkamotoK1979:MR614694} for the classical Painlev\'e equations (see \S\ref{s:background}). The fundamental unifying property is based on the fact that the initial-value (or phase) space of the Painlev\'e equations can be compactified and regularized by a minimum of eight blow-ups on a Hirzebruch surface. This beautiful observation also leads to symmetries of the equation, arising from an isomorphism between the intersection diagram of the resulting space and affine root systems. 

This led Sakai to describe discrete Painlev\'e equations as the result of translations on lattices defined by affine Weyl groups \cite{book_HumphreysJE1992:Reflection}. In particular, Sakai's elliptic difference equation \cite{SakaiH2001:MR1882403} is iterated on the lattice generated by the affine exceptional Lie group $E_8^{(1)}$ (see also \cite{MSY2003:MR1958273,ORG2001:MR1877472}). 
More recently, other elliptic difference equations of Painlev\'e type have been discovered \cite{RCG2009:MR2525848,AHJN2016:MR3509963,JN2017:MR3673467,CDT2017:MR3708091} through other approaches. This review concentrates on describing the construction of such elliptic difference Painlev\'e equations by using Sakai's geometric way.

To describe the construction that underlies and explains all of these examples, we rely on the following mathematical description. Fix a point in the $E_8^{(1)}$ lattice \cite{book_CSBBLNOPQV2013:Sphere}. 
Then there are 240 nearest neighbors of this point in the lattice, lying at a distance whose squared length is equal to 2. We refer to the 120 vectors between the initial fixed point and its possible nearest neighbors as nearest-neighbor-connecting vectors (NVs). Similarly, there are 2160 next-nearest neighbors, lying at a distance whose squared length is 4. The 1080 vectors between the fixed point and such next-nearest neighbors will be referred to as next-nearest-neighbor-connecting vectors (NNVs). 
Sakai's elliptic difference equation is constructed in terms of translations expressed in terms of NVs. 
However, more recently deduced examples are obtained from NNVs.

The example that led us to this key observation is the following elliptic difference Painlev\'e equation, originally found by Ramani, Carstea and Grammaticos\cite{RCG2009:MR2525848}:
\begin{equation}\label{eqn:RCGeqn}
\begin{cases}
 y_{n+1}\Big(k^2(\cge^2-\cz_n^2)\cz_n\dz_n x_n^2 y_n-(1-k^2 \sz_n^4)\cge\dge x_n\\
 \hspace{3em}+(1-k^2\sge^2\sz_n^2)\cz_n\dz_n y_n\Big)\\
 \hspace{1.2em}=(1-k^2\sz_n^4)\cge\dge x_n y_n-(\cge^2-\cz_n^2)\cz_n\dz_n\\
 \hspace{3em}-(1-k^2\sge^2 \sz_n^2)\cz_n\dz_n x_n^2,\\[1em]
 x_{n+1}\Big(k^2(\cgo^2-\hcz_n^2)\hcz_n\hdz_n y_{n+1}^2x_n-(1-k^2 \hsz_n^4)\cgo\dgo y_{n+1}\\
 \hspace{3em}+(1-k^2\sgo^2 \hsz_n^2)\hcz_n\hdz_n x_n\Big)\\
 \hspace{1.2em}=(1-k^2\hsz_n^4)\cgo\dgo y_{n+1} x_n-(\cgo^2-\hcz_n^2)\hcz_n\hdz_n\\
 \hspace{3em}-(1-k^2\sgo^2 \hsz_n^2)\hcz_n\hdz_n y_{n+1}^2,
\end{cases}
\end{equation}
where
\begin{subequations}
\begin{align}
 &\sz_n=\sn{z_n},\quad
 \hsz_n=\sn{z_n+\gae+\gao},\quad
 \sge=\sn{\gae},\\
 &\sgo=\sn{\gao},\quad
 \cz_n=\cn{z_n},\quad
 \hcz_n=\cn{z_n+\gae+\gao},\\
 &\cge=\cn{\gae},\quad
 \cgo=\cn{\gao},\quad
 \dz_n=\dn{z_n},\\
 &\hdz_n=\dn{z_n+\gae+\gao},\quad
 \dge=\dn{\gae},\quad
 \dgo=\dn{\gao},
\end{align}
\end{subequations}
and $z_n=z_0+2(\gae+\gao)n$.
We will refer to this equation as the RCG equation.
Here, ${\rm sn}$, ${\rm cn}$ and ${\rm dn}$ are Jacobi elliptic functions and $k$ is the modulus of the elliptic sine. 
For more information about Jacobian elliptic functions and notations, see \cite[Chapter 22]{NIST:DLMF} and \cite{book_WW1996:course}.

It turns out that the RCG equation \eqref{eqn:RCGeqn} is a {\em projective reduction} of an NNV, i.e., the iterative step is not a translation on the $E_8^{(1)}$ lattice, but its square is a translation corresponding to a NNV. In general, we can derive various discrete Painlev\'e equations from elements of infinite order in the affine Weyl group that are not necessarily translations by taking a projection on a certain subspace of the parameters. Kajiwara {\em et al} studied such procedures \cite{KNT2011:MR2773334,KN2015:MR3340349} to obtain non-translation type discrete Painlev\'e equations  and named these \lq\lq projective reductions\rq\rq . The RCG equation \eqref{eqn:RCGeqn} is the first known case of an elliptic difference Painlev\'e equation obtained from such a reduction.

We constructed a discrete Painlev\'e equation, which has the RCG equation as a projective reduction, by using NNVs on the $E_8^{(1)}$ lattice in \cite{JN2017:MR3673467}.  Most of discrete Painlev\'e equations admit the special solutions expressible in terms of solutions of linear equations 
when some of the parameters take special values (see, for example, \cite{KNY2017:MR3609039} and references therein). 
It is known that projectively-reduced equations have different type of such solutions from translation-type discrete Painlev\'e equations on the same lattice\cite{KNT2011:MR2773334,KN2015:MR3340349}.
In this paper, we will also show the special solutions of Equation \eqref{eqn:RCGeqn}, which is quite different from those of a translation-type elliptic difference Painlev\'e equation reported in \cite{KNY2017:MR3609039,KMNOY2003:MR1984002}.
\subsection{Background}\label{s:background}
Shohat studied polynomials $\Phi_n(x)$ indexed by degree $n\in\mathbb N$, defined in an interval $(-\infty, \infty)$, with a weight function $p(x)=\exp(-x^4/4)$ such that
\begin{equation}\label{eq:ShohatOrthogRel}
  \int_{-\infty}^{\infty}p(x)\Phi_m(x)\,\Phi_n(x)dx=0,\quad (m\not=n,~ m,n\in\mathbb N).
\end{equation}
Shohat obtained the 3-term recurrence relation (see Equation (39) of \cite{s:39})
\begin{equation}\label{eq:Shohat3term}
  \Phi_n(x)-(x-c_n)\,\Phi_{n-1}(x)+\lambda_n\,\Phi_{n-2}(x)=0,\quad n\ge 2,
\end{equation}
where $\Phi_0(x)\equiv 1$, $\Phi_1(x)=x-c_1$, where $c_1$ is independent of $x$, and deduced the following difference equation for $\lambda_n$:
\begin{equation}\label{eq:ShohatdP1}
  \lambda_{n+2}\bigl(\lambda_{n+1}+\lambda_{n+2}+\lambda_{n+3}\bigr)=n+1.
\end{equation}

We now know that this equation is intimately related to one of the six classical Painlev\'e equations, universal classes of second-order ordinary differential equations (ODEs) studied by Painlev\'e \cite{p:02}, Fuchs \cite{f:05} and  Gambier \cite{gambier1910equations}. Fokas \emph{et al} \cite{fokas1991discrete} showed that the solutions of Equation \eqref{eq:ShohatdP1} are solutions of the fourth Painlev\'e equation:
\begin{equation*}
 \Pfour:\quad 
 w''=\frac{{w'}^2}{2w}+\frac{3 w^3}{2}+4tw^2+2(t^2-\al)w+\frac{\beta}{w}. 
\end{equation*}
Actually, solutions of $\Pfour$: $w=w_n$, $n\in\bbZ$, satisfy a more general version of Shohat's equation given by 
\begin{equation}\label{eq:dP1full}
  w_n\,\bigl(w_{n+1}+w_n+w_{n-1}\bigr)=a\,n+b+c\,(-1)^n+d\,w_n,
\end{equation}
where $a$, $b$, $c$, $d$ are constants (see \cite{FGR93}). This equation is an integrable equation in its own right, with fundamental properties such as a Lax pair \cite{cj:99}.

Equation \eqref{eq:dP1full} is one of many equations now known as discrete Painlev\'e equations. 
In general, there exist three types of discrete Painlev\'e equations. 
They are distinguished by the types of function $t_n$ appearing in the coefficient of each equation.
\begin{enumerate}
\item
If there exists $k\in\bbZ_{>0}$ such that $t_{n+k}-t_n$ is a constant, 
then the equation is said to be of {\em additive-type}.
\item
If there exists $k\in\bbZ_{>0}$ such that $t_{n+k}/t_n$ is a constant, denoted $q$ $(\not=0, 1)$, then the equation is said to be of {\em multiplicative-type} or {\em $q$-difference-type}.
\item
If $t_n$ can be expressed by the elliptic function of $n$, then the equation is said to be of {\em elliptic-type}.
\end{enumerate} 
We list a few discrete Painlev\'e equations here.
\begin{align}
 \text{d-P$_{\rm II}$}\text{\cite{PS1990:Unitary-matrix}}:~
 &X_{n+1}+X_{n-1}=\cfrac{t_nX_n+a}{1-{X_n}^2}\label{eqn:dp2},\\
 \text{$q$-P$_{\rm III}$}\text{\cite{RGH1991:MR1125951}}:~
 &X_{n+1}X_{n-1}=\cfrac{(X-at_n)(X-a^{-1}t_n)}{(X-b)(X-b^{-1})},\label{eqn:qp3}\\
 \text{d-P$_{\rm IV}$}\text{\cite{RGH1991:MR1125951}}:~
 &(X_{n+1}+X_n)(X_n+X_{n-1})
 =\cfrac{({X_n}^2-a^2)({X_n}^2-b^2)}{(X_n-t_n)^2-c^2},\label{eqn:dp4}\\
 \text{$q$-P$_{\rm V}$}\text{\cite{RGH1991:MR1125951}}:~
 &(X_{n+1}X_n-1)(X_nX_{n-1}-1)\notag\\
 &={t_n}^2\cfrac{(X_n-a)(X_n-a^{-1})(X_n-b)(X_n-b^{-1})}{(X_n-ct_n)(X_n-c^{-1}t_n)},\label{eqn:qp5}
\end{align}
where $a$, $b$ and $c$ are constants. 
Here, $t_{n+1}-t_n$ is a constant for Equations \eqref{eqn:dp2} and \eqref{eqn:dp4} 
and $t_{n+1}/t_n$ is a constant for Equations \eqref{eqn:qp3} and \eqref{eqn:qp5},
that is, Equations \eqref{eqn:dp2} and \eqref{eqn:dp4} are additive-type,
while Equations \eqref{eqn:qp3} and \eqref{eqn:qp5} are multiplicative-type.
Note that the notation for each equation originates from their discrete types and continuum limits.
Moreover, an example of elliptic-type is given by Equation \eqref{eqn:RCGeqn}.

Okamoto \cite{OkamotoK1979:MR614694} described a geometric framework for studying the Painlev\'e equations. He showed that the initial-value (or phase) space of the Painlev\'e equations, which is a foliated vector bundle \cite{milnor1970foliations},  can be compactified and regularised by a minimum of eight blow-ups on a Hirzebruch surface (or nine in $\mathbb P^2$).

This geometric theory also leads to a description of their symmetry groups, described in terms of affine Weyl groups \cite{n:04}. Such symmetries lead to transformations of the Painlev\'e equations called \emph{B\"acklund transformations}. 

Sakai's geometric description of discrete Painlev\'e equations, based on types of space of initial values,
is well known\cite{SakaiH2001:MR1882403}.
This picture relies on compactifying and regularizing space of initial values.
The spaces of initial values are constructed by  blow up of $\bbP^1\times\bbP^1$ at  base points (see \S \ref{section:General setting}) 
and are classified into 22 types according to the configuration of the base points as follows:
\begin{center}
\begin{tabular}{|l|l|}
\hline
Discrete type&Type of surface\\
\hline
Elliptic&$A_0^{(1)}$\rule[-.5em]{0em}{1.6em}\\
\hline
Multiplicative&$A_0^{(1)\ast}$, $A_1^{(1)}$, $A_2^{(1)}$, $A_3^{(1)}$, \dots, $A_8^{(1)}$, $A_7^{(1)'}$\rule[-.5em]{0em}{1.6em}\\
\hline
Additive&$A_0^{(1)\ast\ast}$, $A_1^{(1)\ast}$, $A_2^{(1)\ast}$, $D_4^{(1)}$, \dots, $D_8^{(1)}$, $E_6^{(1)}$, $E_7^{(1)}$, $E_8^{(1)}$\rule[-.5em]{0em}{1.6em}\\
\hline
\end{tabular}
\end{center}
In each case, the root system characterizing the surface forms a subgroup of the 10-dimensional Picard lattice. 
The symmetry group of each equation, formed by Cremona isometries, arises from the orthogonal complement of this root system inside the Picard lattice.
\subsection{Periodic reduction of the Q4-equation}
\label{subsection:periodic_reduction}
In this section, we recall how to obtain Equation \eqref{eqn:RCGeqn} 
from the lattice Krichever-Novikov system\cite{AdlerVE1998:MR1601866,HietarintaJ2005:MR2217106} (or, Q4 in the terminology of  Adler-Bobenko-Suris\cite{ABS2003:MR1962121}):
\begin{align}\label{eqn:Q4eqn}
 &\sn{\al_l}(u_{l,m}u_{l+1,m}+u_{l,m+1}u_{l+1,m+1})-\sn{\beta_m}(u_{l,m}u_{l,m+1}+u_{l+1,m}u_{l+1,m+1})\notag\\
 &~-\sn{\al_l-\beta_m}\big(u_{l+1,m}u_{l,m+1}+u_{l,m}u_{l+1,m+1}\big)\notag\\
 &~+\sn{\al_l}\sn{\beta_m}\sn{\al_l-\beta_m}\big(1+k^2u_{l,m}u_{l+1,m}u_{l,m+1}u_{l+1,m+1}\big)=0,
\end{align}
where $\al_l$ and $\beta_m$ are parameters, $u_{l,m}$ is the dependent variable, $l$ and $m$ are independent variables (often taken to be integer) and $k$ is the modulus of the elliptic function $\rm sn$. Taking a periodic reduction $u_{l+1,m-1}=u_{l,m}$ of Equation \eqref{eqn:Q4eqn}, and identifying $n=l+m$, leads to an autonomous second order ordinary difference equation \cite{JGTR2006:MR2271126}:	
\begin{align}\label{eqn:aut_Q4eqn}
 &\big(\sn{\al_0}-\sn{\beta_0}\big)u_n(u_{n+1}+u_{n-1})-\sn{\al_0-\beta_0}(u_{n+1}u_{n-1}+{u_n}^2)\notag\\
 &+\sn{\al_0}\sn{\beta_0}\sn{\al_0-\beta_0}(1+k^2{u_n}^2u_{n+1}u_{n-1})=0.
\end{align}
Ramani {\em et al} \cite{RCG2009:MR2525848} deautonomised Equation \eqref{eqn:aut_Q4eqn} by the method of singularity confinement after a change of variables $\al_0\rightarrow \gamma+z$\,, $\beta_0\rightarrow \gamma -z$.
The resulting equation then becomes Equation \eqref{eqn:RCGeqn} with $x_n=u_{2n}$ and $y_n=u_{2n-1}$.
\subsection{Outline of the paper}
\label{subsection:outline}
In \S \ref{section:weight_lattice_E8}, 
we recall the basic definitions of reflection group theory and define translations, for the interested reader.
The initial-value space of the RCG equation \eqref{eqn:RCGeqn} is constructed in \S \ref{section:General setting}, where we also introduce related algebro-geometric concepts. 
In \S \ref{section:cremona},
we construct Cremona isometries on this initial value space, which roughly speaking, are mappings that preserve its geometric structure.
In \S \ref{section:birational},
we give the resulting birational actions on the coordinates and parameters of the initial value space.
We show that by using these birational actions, we arived at the RCG equation. Some explicit special solutions of the RCG equation are described in
\S \ref{section:Special sols RCG equation}. 
In Appendix \ref{appendix:A4} we illustrate the geometric ideas for the case of $A_4^{(1)}$, and
in Appendix \ref{appendix:general_elliptic} we describe the generic known examples of elliptic difference Painlev\'e equations.
\section{$E_8^{(1)}$-lattice}
\label{section:weight_lattice_E8}
To understand how to construct discrete Painlev\'e equations from symmetry groups, we need the theory of finite and affine reflection groups. In this section, we recall the basic definitions of reflection group theory before defining the transformation group $W(E_8^{(1)})$ and describing its translations operators.

Consider two $n$-dimensional real vector spaces $V$ and $V^*$, spanned by the basis sets $\Delta=\bigl\{\al_1, \ldots, \al_n\bigr\}$ and $\Delta^\vee= \bigl\{\al_1^\vee, \ldots, \al_n^\vee\bigr\}$ respectively. The elements of $\Delta$ are called {\em simple roots}, while those of $\Delta^\vee$ are {\em simple coroots}. To define reflections, we use a bilinear pairing given by the entries of an $n\times n$ Cartan matrix $A=(A_{ij})$ (see the definition of Cartan matrices in \cite{bourbaki2007groupes}):
 \begin{equation}
  \langle\oc\al_i,\al_j\rangle=A_{ij},
\end{equation}
for all $i, j\in \{1, \ldots, n\}$. If $V$ and $V^*$ are Euclidean spaces, this bilinear pairing is the usual inner product. 

It is also important to define the fundamental {\em weights} $h_i$, $i=1, \ldots, n$, which are given by
\begin{equation}\label{eq:h}
\langle \oc\al_i,h_j\rangle=\delta_{ij}, \quad (1\leq i, j\leq n).
\end{equation}
Correspondingly, the integer linear combinations (or $\bbZ$-modules)
\begin{equation}
 Q=\sum_{k=1}^n\bbZ\al_k,\quad
 \oc Q=\sum_{k=1}^n\bbZ\oc\al_k,\quad
 P=\sum_{k=1}^n\bbZ h_k,
\end{equation}
are called the {\em root lattice}, {\em coroot lattice} and {\em weight lattice} respectively.
We are now in a position to define reflections. For each $i\in\{1, \ldots, n\}$, the linear operator defined by
\begin{equation}
  s_{\al_i}(\al_j)=\al_j-A_{ji}\al_i,\quad
  s_{\al_i}(\oc\al_j)=\oc\al_j-A_{ji}\oc\al_i
\end{equation}
  for every $j\in \{1, \ldots, n\}$ is a {\em reflection operator}. That is, it has the following properties:
  \begin{enumerate}
      \item $s_{\al_i}(\al_i)=-\al_i$,\quad $s_{\al_i}(\oc\al_i)=-\oc\al_i$.
      \item $s_{\al_i}^2=1$.
      \item $s_{\al_i}.\langle \oc\al_j, \al_k\rangle
      =\langle s_{\al_i}(\oc\al_j), s_{\al_i}(\al_k)\rangle
      =\langle \oc\al_j, \al_k\rangle$.
  \end{enumerate}
 The group $W$ generated by $s_{\al_1}$, $\ldots$, $s_{\al_n}$ is called a Weyl group. The root system of $W$ is defined to be the subset $\Phi$ of $Q$ given by $\Phi=W(\Delta)$. A root system is said to be irreducible if it is not a combination of mutually orthogonal root systems. 
Each irreducible root system $\Phi$ contains a unique root given by
\begin{equation}\label{eq:maxroot}
  \widetilde\al=\sum_i C_i\al_i,
\end{equation}
whose height (i.e., the sum of coefficients in the expansion) is maximal, which is called the \emph{highest root}. 

For our case, $n=8$, and the starting point is the Cartan matrix of type $E_8$ 
\begin{equation}\label{eqn:finite_cartan_E8}
 A=
 \begin{pmatrix}
 2&-1&0&0&0&0&0&0\\
 -1&2&-1&0&0&0&0&0\\
 0&-1&2&-1&0&0&0&-1\\
 0&0&-1&2&-1&0&0&0\\
 0&0&0&-1&2&-1&0&0\\
 0&0&0&0&-1&2&-1&0\\
 0&0&0&0&0&-1&2&0\\
 0&0&-1&0&0&0&0&2
 \end{pmatrix}.
\end{equation}
The astute reader might notice that this is not the standard one given in Bourbaki \cite{bourbaki2007groupes}, but it is equivalent to it (under conjugate transforms of the bases). We make this choice because it corresponds to the identification of simple roots made in Section \ref{section:cremona}. Moreover, the highest root and coroot are given respectively by
\begin{subequations}\label{eqn:highest_root_E8}
\begin{align}
 &\widetilde\al=2\al_1+4\al_2+6\al_3+5\al_4+4\al_5+3\al_6+2\al_7+3\al_8,\\
 &\widetilde\al^\vee=2\oc\al_1+4\oc\al_2+6\oc\al_3+5\oc\al_4+4\oc\al_5+3\oc\al_6+2\oc\al_7+3\oc\al_8.
\end{align}
\end{subequations}

We now expand the root system by defining $\al_0$ by $\al_0+\widetilde\al=0$. The corresponding extension of the coroot system is defined by $\oc\al_0$ and $\oc\de$ (called the {\em null coroot}): $\oc\al_0+\widetilde\al^\vee=\oc\de$.
To construct the corresponding affine Weyl group, we now extend the Cartan matrix $A$ by adding a row and column given respectively by
\begin{equation}\label{eq:extrarowA}
A_{j0}=\langle -\widetilde\al^\vee,\al_j\rangle,\quad
A_{0j}=\langle \al_j^\vee,-\widetilde\al\rangle,
\end{equation}
along with $A_{00}=2$. The extended Cartan matrix and corresponding root systems and groups are now denoted with a superscript containing $(1)$ to denote this extension. 

The extended Cartan matrix of type $E_8^{(1)}$ is given by
\begin{equation}\label{eqn:cartan_E8}
 A^{(1)}=(A_{ij})_{i, j=0}^8
 =\begin{pmatrix}
 2&0&0&0&0&0&0&-1&0\\
 0&2&-1&0&0&0&0&0&0\\
 0&-1&2&-1&0&0&0&0&0\\
 0&0&-1&2&-1&0&0&0&-1\\
 0&0&0&-1&2&-1&0&0&0\\
 0&0&0&0&-1&2&-1&0&0\\
 0&0&0&0&0&-1&2&-1&0\\
 -1&0&0&0&0&0&-1&2&0\\
 0&0&0&-1&0&0&0&0&2
 \end{pmatrix}.
\end{equation}

The reflections have the form 
\begin{equation}\label{eqn:s_weight_E8}
s_i(\lambda)=\lambda-\langle \al^\vee_i, \lambda\rangle \al_i,\quad i=0,\dots,8,~\la\in P.
\end{equation}
In particular, this gives $s_i(h_i)=h_i-\al_i, i=0,\dots,8$.
On the other hand, each root $\al_i$ can be expressed in terms of weights $h_i$ by using the relationship 
\[
\al_i=\sum_{j=0}^8 A_{ij}h_j.
\]
Therefore, we obtain the reflections on the fundamental weights as follows:
\begin{equation}\label{eqn:s_h_E8}
\begin{aligned}
 &s_0(h_0)=-h_0+h_7, &&s_1(h_1)=-h_1+h_2,&\\
 &s_2(h_2)=h_1-h_2+h_3, &&s_3(h_3)=h_2-h_3+h_4+h_8,&\\
 &s_4(h_4)=h_3-h_4+h_5, &&s_5(h_5)=h_4-h_5+h_6,&\\
 &s_6(h_6)=h_5-h_6+h_7, &&s_7(h_7)=h_0+h_6-h_7,&\\
 &s_8(h_8)=h_3-h_8. &&&
\end{aligned}
\end{equation}
It is useful to express the actions of the reflections on the roots, and we obtain
\begin{equation}\label{eqns:action_E8_alpha}
\begin{split}
 &s_0:(\al_0,\al_7)\mapsto(-\al_0,\al_7+\al_0),\qquad
 s_1:(\al_1,\al_2)\mapsto(-\al_1,\al_2+\al_1),\\
 &s_2:(\al_1,\al_2,\al_3)\mapsto(\al_1+\al_2,-\al_2,\al_3+\al_2),\\
 &s_3:(\al_2,\al_3,\al_4,\al_8)\mapsto(\al_2+\al_3,-\al_3,\al_4+\al_3,\al_8+\al_3),\\
 &s_4:(\al_3,\al_4,\al_5)\mapsto(\al_3+\al_4,-\al_4,\al_5+\al_4),\\
 &s_5:(\al_4,\al_5,\al_6)\mapsto(\al_4+\al_5,-\al_5,\al_6+\al_5),\\
 &s_6:(\al_5,\al_6,\al_7)\mapsto(\al_5+\al_6,-\al_6,\al_7+\al_6),\\
 &s_7:(\al_6,\al_7)\mapsto(\al_6+\al_7,-\al_7),\qquad
 s_8:(\al_3,\al_8)\mapsto(\al_3+\al_8,-\al_8).
\end{split}
\end{equation}
Note that fundamental weights and simple roots which are not explicitly shown in Equation \eqref{eqn:s_h_E8} and \eqref{eqns:action_E8_alpha} remain unchanged under the action of the corresponding reflections.

Under the linear actions on the weight lattice \eqref{eqn:s_weight_E8},
$W(E_8^{(1)})=\langle s_0,\dots,s_8\rangle$ forms an affine Weyl group of type $E_8^{(1)}$. 
Indeed, the following fundamental relations hold:
\begin{equation}\label{eqns:fundamental_We8}
 \hspace{-3em}(s_is_j)^{l_{ij}}=1,\ \text{where}\ 
 l_{ij}=
\begin{cases}
 1,& i=j\\
 3, &i=j-1,\ j=2,\dots,7,\  \text{or}\  (i,j)=(3,8),(7,0)\\
 2, &\text{otherwise}.
\end{cases} 
\end{equation}
Note that representing the simple reflections $s_i$ by nodes and connecting $i$-th and $j$-th nodes by $(l_{ij}-2)$ lines,
we obtain the Dynkin diagram of type $E_8^{(1)}$ shown in Figure \ref{fig:dynkin_E8}.

\begin{figure}[htbp]
\begin{center}
\includegraphics[width=0.6\textwidth]{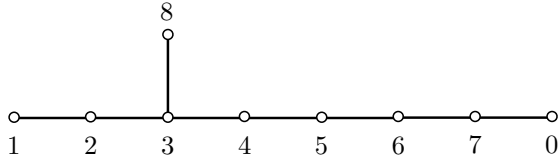}
\caption{Dynkin diagram of type $E_8^{(1)}$.}
\label{fig:dynkin_E8}
\end{center}
\end{figure}

\begin{remark}
We can also define the action of $W(E_8^{(1)})$ on the coroot lattice $\oc Q$ by replacing $\al_i$ with $\oc\al_i$ in the actions \eqref{eqns:action_E8_alpha}.  
Note that $\oc\de$ is invariant under the action of $W(E_8^{(1)})$.
Then, the transformations in $W(E_8^{(1)})$ preserve the form $\langle\cdot,\cdot\rangle$, that is, the following holds:
\begin{equation}\label{eqn:preserve_bracket}
 w.\langle\ga,\la\rangle
 =\langle w(\ga),w(\la)\rangle
 =\langle\ga,\la\rangle,
\end{equation}
for arbitrary $w\in W(E_8^{(1)})$, $\ga\in\oc Q$ and $\la\in P$.
\end{remark}

For each root $\al\in Q$, we define the Kac translation $T_\al$ on the weight lattice $P$ by
\begin{equation}
 T_\al(\la)=\la-\langle\oc\de,\la\rangle\al,\quad
 \la\in P,
\end{equation}
and on the coroot lattice $Q$ by
\begin{equation}\label{eqn:def_kac_coroot}
 T_\al(\oc\la)=\oc\la+\langle\oc\la,\al\rangle\oc\de,\quad
 \oc\la\in Q.
\end{equation}
We can easily verify that under the linear actions above the translation $T_\al$ have the following properties.
\begin{enumerate}
\item
For any $\al,\beta\in Q$, $T_\al\circ T_\beta=T_{\al+\beta}$.
\item
For any $w\in W(E_8^{(1)})$ and $\al\in Q$,
\begin{equation}\label{eqn:wT=Tw}
 w\circ T_\al=T_{w(\al)}\circ w.
\end{equation}
\item
For any $\al,\beta\in Q$, $T_\al(\beta)=\beta$.
\end{enumerate}
For any $\al=\sum_{i=0}^8c_i \al_i\in Q$, we define the squared length of $T_\al$ by
\begin{equation}\label{eqn:def_squared length_weight}
 |T_{\al}|^2:=\langle \oc\al,\al\rangle,
\end{equation}
where $\oc\al=\sum_{i=0}^8c_i \oc\al_i$.
Then, from \eqref{eqn:preserve_bracket} and \eqref{eqn:wT=Tw}, we obtain the following lemma.
\begin{lemma}
For any $\al,\beta\in Q$,
if $T_\al$ and $T_\beta$ are conjugate to each other in $W(E_8^{(1)})$, 
then the squared lengths of $T_\al$ and $T_\beta$ are equal. 
\end{lemma}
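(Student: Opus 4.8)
The plan is to derive the statement directly from the conjugation hypothesis together with the two structural facts about Kac translations established just above: the intertwining relation \eqref{eqn:wT=Tw} and the invariance of the pairing \eqref{eqn:preserve_bracket}. Suppose $T_\beta = w \circ T_\al \circ w^{-1}$ for some $w \in W(E_8^{(1)})$. By \eqref{eqn:wT=Tw} we have $w \circ T_\al \circ w^{-1} = T_{w(\al)} \circ w \circ w^{-1} = T_{w(\al)}$, so the conjugate of $T_\al$ by $w$ is exactly the Kac translation $T_{w(\al)}$. Hence $T_\beta = T_{w(\al)}$ as operators on $P$.

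Next I would argue that this operator identity forces $\beta = w(\al)$ at the level of roots. For this, pick any weight $\la \in P$ with $\langle \oc\de, \la\rangle \neq 0$ (for instance a suitable fundamental weight, since $\oc\de$ is a nonzero element of the coroot lattice and the pairing is nondegenerate on the relevant subspaces); then $T_\beta(\la) - \la = -\langle\oc\de,\la\rangle\,\beta$ and $T_{w(\al)}(\la) - \la = -\langle\oc\de,\la\rangle\,w(\al)$, and comparing gives $\beta = w(\al)$. Consequently $\oc\beta = w(\oc\al)$ as well, using that the action of $W(E_8^{(1)})$ on $\oc Q$ is defined by the same formulas \eqref{eqns:action_E8_alpha} with $\al_i$ replaced by $\oc\al_i$, so applying $w$ commutes with the "$\vee$" operation on the explicit $\bbZ$-basis.

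Finally I would compute the squared lengths and invoke \eqref{eqn:preserve_bracket}. By definition \eqref{eqn:def_squared length_weight},
\[
 |T_\beta|^2 = \langle \oc\beta, \beta\rangle = \langle w(\oc\al), w(\al)\rangle = \langle \oc\al, \al\rangle = |T_\al|^2,
\]
where the middle equality uses $\beta = w(\al)$ and $\oc\beta = w(\oc\al)$, and the third uses the $W(E_8^{(1)})$-invariance of the pairing \eqref{eqn:preserve_bracket} (valid for $\oc\al \in \oc Q$ and $\al$ regarded in $P$ via the inclusion $Q \subset P$, which is implicit in the definition of $|T_\al|^2$). This completes the argument.

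The one point requiring a little care — and the only genuine obstacle — is the passage from the equality of operators $T_\beta = T_{w(\al)}$ to the equality of roots $\beta = w(\al)$, i.e. checking that a weight $\la$ with $\langle\oc\de,\la\rangle \neq 0$ exists so that the translation parameter is recovered unambiguously; this is immediate once one notes $\oc\de = \oc\al_0 + \widetilde\al^\vee$ is a genuine nonzero functional on $P$, but it should be stated explicitly rather than glossed over. Everything else is a direct substitution using results already proved in the excerpt.
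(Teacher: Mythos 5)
Your proposal is correct and follows essentially the same route as the paper's own proof: conjugation plus the intertwining relation \eqref{eqn:wT=Tw} gives $T_\beta=T_{w(\al)}$, and then the $W\big(E_8^{(1)}\big)$-invariance of the pairing \eqref{eqn:preserve_bracket} together with the definition \eqref{eqn:def_squared length_weight} yields $|T_\al|^2=|T_{w(\al)}|^2=|T_\beta|^2$. The only difference is that you explicitly justify recovering $\beta=w(\al)$ from the operator identity by evaluating at a weight $\la$ with $\langle\oc\de,\la\rangle\neq 0$, a step the paper passes over silently; this is a reasonable (and correct) addition, not a change of method.
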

\begin{proof}
Assume that $T_\al$ and $T_\beta$, where $\al,\beta\in Q$, are conjugate to each other in $W(E_8^{(1)})$.
Let $w\circ T_\al\circ w^{-1}=T_\beta$,
where $w\in W(E_8^{(1)})$.
Since of \eqref{eqn:wT=Tw}, we obtain
$w\circ T_\al\circ w^{-1}=T_{w(\al)}$,
which gives
\begin{equation}
 T_{w(\al)}=T_\beta.
\end{equation}
Since of \eqref{eqn:preserve_bracket} and \eqref{eqn:def_squared length_weight}, the statement follows from
\begin{equation}
 |T_{\al}|^2=|T_{w(\al)}|^2=|T_{\beta}|^2.
\end{equation}
\end{proof}

In \cite{MSY2003:MR1958273} Murata {\em et al} consider the following translation as the time evolution of the Sakai's elliptic difference equation:
\begin{equation}\label{sakai_T}
T^{(M)}=s_{1238432543865432765438076543212345670834567234568345234832},
\end{equation}
whose action on the coroot lattice $\oc Q$ is given by
\begin{equation}\label{eqn:MSY_T_coroot}
 T^{(M)}(\oc\al_1)=\oc\al_1+2\oc\de,\quad
 T^{(M)}(\oc\al_2)=\oc\al_2-\oc\de,
\end{equation}
while in \cite{AHJN2016:MR3509963,JN2017:MR3673467} Joshi {\em et al} showed that 
the squared time evolution of the RCG equation corresponds to the following translation:
\begin{equation}\label{JN_T}
 T^{(JN)}=s_{56453483706756452348321 56453483706756452348321706734830468}^2,
\end{equation}
whose action on the coroot lattice $\oc Q$ is given by
\begin{equation}\label{eqn:RCG_T_coroot}
 T^{(JN)}(\oc\al_1)=\oc\al_1-2\oc\de,\quad
 T^{(JN)}(\oc\al_5)=\oc\al_5+\oc\de.
\end{equation}
Note that for convenience we use the following notations for the composition of the reflections $s_i$:
\begin{align}\label{eqn:notation_composition_si}
 s_{i_1\cdots i_m}=s_{i_1}\circ\dots\circ s_{i_m},\quad
 s_{i_1\cdots i_m}^2=s_{i_1\cdots i_m}\circ s_{i_1\cdots i_m},
\end{align}
where $i_1\cdots i_m\in\{0,\dots,8\}$.
Comparing \eqref{eqn:def_kac_coroot} and \eqref{eqn:MSY_T_coroot} 
and comparing \eqref{eqn:def_kac_coroot} and \eqref{eqn:RCG_T_coroot},
we can respectively express the translations $T^{(M)}$ and $T^{(JN)}$ by the Kac translations as the following:
\begin{equation}
 T^{(M)}=T_{\al_1},\quad
 T^{(JN)}=T_{\al_0+2\al_2+4\al_3+4\al_4+4\al_5+3\al_6+2\al_7+2\al_8},
\end{equation}
where
\begin{equation}
 |T_{\al_1}|^2=2,\quad
 |T_{\al_0+2\al_2+4\al_3+4\al_4+4\al_5+3\al_6+2\al_7+2\al_8}|^2=4.
\end{equation}
Therefore, the translations $T^{(M)}$ and $T^{(JN)}$ are not conjugate to each other in $W(E_8^{(1)})$ and respectively correspond to a NV and a NNV.

\begin{remark}
As another example, we will show the lattice and transformation group of type $A_4^{(1)}$ in \S \ref{appendix:A4}.
\end{remark}

\section{The initial value space of the RCG equation}
\label{section:General setting}
Consider the RCG equation \eqref{eqn:RCGeqn} as a discrete dynamical system. It is reversible and so the system maps $(x_n, y_n)$ to $(x_{n+1}, y_{n+1})$ at each forward time step or to $(x_{n-1}, y_{n-1})$ at each backward step. Because the coefficients of $y_{n+1}$ or $x_{n+1}$ may vanish, the iterates may become unbounded and we compactify this system by embedding it in the projective space $\mathbb P^1\times \mathbb P^1$.

Compactification is not enough to avoid all problems. Let $\phi$ and $\phi^{-1}$ be respectively the forward and backward time evolution of Equation \eqref{eqn:RCGeqn}.
We denote the action of these mappings $\phi$ by
\begin{subequations}
\begin{align}
 \phi\,:(x,y;\gae,\gao,z_0)&\mapsto \big(\tx,\ty;\gae,\gao,z_0+2(\gae+\gao)\big),\\
 \phi^{-1}\,:(x,y;\gae,\gao,z_0)&\mapsto \big(\utilde{x},\utilde{y};\gae,\gao,z_0-2(\gae+\gao)\big).
\end{align}
\end{subequations}
(We obtain Equation \eqref{eqn:RCGeqn} by writing
$x_n=\phi^n(x)$, $y_n=\phi^n(y)$, $z_n=\phi^n(z_0)$.)
Then there exist points where the image values of $\phi$ or $\phi^{-1}$ approach $0/0$.  We refer to such points as {\em base points} because they are equivalent to the usual definition used in the case of algebraic curves \cite{griffiths1989introduction}. Below, we describe the base points of the RCG equation explicitly.

In general, the process of blowing up a finite sequence of points $p_i$, possibly infinitely near each other, in $\mathbb P^1\times \mathbb P^1$, leads to a variety $X$, called the initial value space. Assuming there are 8 blow-ups, then let the sequence of blow-ups be $\pi_i : X_i\mapsto X_{i-1}$ of $p_i$ in $X_{i-1}$, with $X=X_8\to \ldots \to X_0=$ $\mathbb P^1\times \mathbb P^1$. Each blow-up replaces $p_i$ by an exceptional line $\mathcal{E}_i$. We refer to the total sequence of blow ups by $\ep: X \to \bbP^1\times\bbP^1$, and moreover, denote the linear equivalence classes of the total transform of vertical and horizontal lines in $\bbP^1\times\bbP^1$  respectively by $H_0$ and $H_1$.

To find base points of Equation \eqref{eqn:RCGeqn}, we need to find simultaneous zeroes of pairs of polynomials. For example, base points arising from the component $\tilde y$ in the action of $\phi$ lie at the simultaneous solutions of 
\[
\begin{cases}
(1-k^2\sz^4)\cge\dge\, xy-(\cge^2-\cz^2)\cz\,\dz-(1-k^2\sge^2 \sz^2)\cz\,\dz\, x^2=0,\\
k^2(\cge^2-\cz^2)\cz\,\dz\, x^2 y-(1-k^2 \sz^4)\cge\dge\, x+(1-k^2\sge^2\sz^2)\cz\,\dz\,y=0,
\end{cases}
\]
where the dependence on $n$ has been suppressed. These  polynomial equations can be solved explicitly. For example, the first equation in the above pair can be solved for $y$ in terms of $x$, and substituting into the second equation leads to a quartic equation for $x$. The four solutions of this equation can be expressed explicitly in terms of the coefficient elliptic functions by using elliptic function identities. A similar argument gives us four more base points arising from the remaining equations; for $\tx$ in the mapping $\phi$ and for $\utilde{x}$ and $\utilde{y}$ in the mapping $\phi^{-1}$.

These lead us to the eight base points listed below:
\begin{subequations}\label{eqn:basepoints_RCG}
\begin{align}
 &p_1:(x,y)=\big(\cd{\gao+\ka},\cd{z_0-\gae-\gao+\ka}\big),\\
 &p_2:(x,y)=\big(\cd{\gao+\iii K'},\cd{z_0-\gae-\gao+\iii K'}\big),\\
 &p_3:(x,y)=\big(\cd{\gao+2K},\cd{z_0-\gae-\gao+2K}\big),\\
 &p_4:(x,y)=\big(\cd{\gao},\cd{z_0-\gae-\gao}\big),\\
 &p_5:(x,y)=\big(\cd{z_0+\ka},\cd{\gae+\ka}\big),\\
 &p_6:(x,y)=\big(\cd{z_0+\iii K'},\cd{\gae+\iii K'}\big),\\
 &p_7:(x,y)=\big(\cd{z_0+2K},\cd{\gae+2K}\big),\\
 &p_8:(x,y)=\big(\cd{z_0},\cd{\gae}\big),
\end{align}
\end{subequations}
where $K=K(k)$ and $K'=K'(k)$ are complete elliptic integrals and
\begin{equation}\label{eqn:kappa}
 \ka=2K+\iii K',
\end{equation}
which lie on the elliptic curve
\begin{equation}\label{eqn:RCG_curve}
 \sn{z_0-\gae}^2(1+k^2x^2y^2)+2\cn{z_0-\gae}\dn{z_0-\gae}xy-(x^2+y^2)=0.
\end{equation}

The base points \eqref{eqn:basepoints_RCG} can be generalized to 
\begin{equation}\label{eqn:basepoints}
 p_i:(x,y)=\big(\cd{c_i+\eta},\cd{\eta-c_i}\big),\quad i=1,\dots,8,
\end{equation}
where $c_i$, $i=1,\dots,8$, and $\eta$ are non-zero complex parameters. 
These points lie on the elliptic curve
\begin{equation}\label{eqn:Jelliptic_curve}
 \sn{2\eta}^2(1+k^2x^2y^2)+2\cn{2\eta}\dn{2\eta}xy-(x^2+y^2)=0.
\end{equation}
The generalized base points \eqref{eqn:basepoints} and elliptic curve \eqref{eqn:Jelliptic_curve}
can be respectively reduced to the points \eqref{eqn:basepoints_RCG} and curve \eqref{eqn:RCG_curve} by taking
\begin{subequations}\label{eqns:condition_RCG}
\begin{align} 
 &\cc2=\cc1+2K,\quad
 \cc3=\cc1+\iii K',\quad
 \cc4=\cc1+\ka,\quad
 \cc6=\cc5+2K,\\
 &\cc7=\cc5+\iii K',\quad
 \cc8=\cc5+\ka,
\end{align}
and letting
\begin{equation}
 z_0=\eta+\cc5+\ka,\quad
 \gae=\cc5-\eta+\ka,\quad
 \gao=\eta+\cc1+\ka.
\end{equation}
\end{subequations}
Note that the two biquadratic curves \eqref{eqn:RCG_curve} and \eqref{eqn:Jelliptic_curve} are non-singular, for $k\not=0, \pm 1$. It follows that each curve is parametrized by elliptic functions. The resulting initial value space obtained after resolution is of type $A_0^{(1)}$ as shown in the following section. 

\section{Cremona isometries}
\label{section:cremona}
As explained in \S\ref{section:General setting}, we now investigate a variety $X$, obtained after a sequence of blow-ups. We focus on surfaces $X$ defined by blowing up  base points on biquadratic curves, such as Equations \eqref{eqn:basepoints} and \eqref{eqn:Jelliptic_curve}. The resulting structure contains equivalence classes of lines and information about their  intersections. In this section, we construct Cremona isometries, which are roughly speaking, mappings of $X$ that preserve this structure. As a result, they provide symmetries of the dynamical system iterated on $X$.

An important object in this framework is given by the Picard lattice of $X$, or ${\rm Pic}(X)$, which is defined by
\begin{equation}
 \PicX=\bbZ H_0+\bbZ H_1+\bbZ \EE{1}+\cdots+\bbZ \EE{8},
\end{equation}
where $\EE{i}=\ep^{-1}(p_i)$, $i=1,\dots,8$, are exceptional divisors obtained from blow-up of the base points \eqref{eqn:basepoints}. We define a symmetric bilinear form, called the intersection form, on ${\rm Pic}(X)$ by 
\begin{equation}
 (H_i|H_j)=1-\de_{ij},\quad
 (H_i|\EE{j})=0,\quad
 (\EE{i}|\EE{j})=-\de_{ij},
\end{equation}
where $\de_{ij}=0$, if $i\not=j$, or $1$, if $i=j$. The anti-canonical divisor of $X$ is given by 
\begin{equation}
 -K_X=2H_0+2H_1-\sum_{i=1}^8\EE{i}.
\end{equation}
For later convenience, let $\delta=-K_X$.
The anti-canonical divisor $\de$ corresponds to the curve of bi-degree $(2,2)$ passing through the base points $p_i$, $i=1,\dots,8$, with multiplicity $1$, that is, the curve \eqref{eqn:Jelliptic_curve}.
Since this curve is non-singular, for $k\not=0, \pm 1$,
the anti-canonical divisor cannot be decomposed.
Therefore, we can identify the surface $X$ as being of type $A_0^{(1)}$ in Sakai's classification\cite{SakaiH2001:MR1882403}.

We define the root lattice
\begin{equation}
 Q(A_0^{(1)\bot})=\sum_{k=0}^8\bbZ\beta_k
\end{equation}
in {\rm Pic}(X) 
that are orthogonal to the anti-canonical divisor $\delta$. 
The simple roots $\beta_i$, $i=0,\dots,8$, are given by
\begin{align}
 &\beta_1=H_1-H_0,\quad
 \beta_2=H_0-\EE1-\EE2,\quad
 \beta_i=\EE{i-1}-\EE{i},\quad i=3,\dots,7,\notag\\
 &\beta_8=\EE1-\EE2,\quad
 \beta_0=\EE7-\EE8,
\end{align}
where
\begin{equation}
 \de=2\beta_1+4\beta_2+6\beta_3+5\beta_4+4\beta_5+3\beta_6+2\beta_7+3\beta_8+\beta_0.
\end{equation}
We can easily verify that
\begin{equation}
 (\beta_i|\beta_j)
 =\begin{cases}
 -2,& i=j\\
 \ 1, &i=j-1\quad (j=2,\dots,7),\quad \text{or\hspace{0.5em}if}\quad (i,j)=(3,8),(7,0)\\
 \ 0, &\text{otherwise}.
\end{cases}
\end{equation}
Representing intersecting $\beta_i$ and $\beta_j$ by a line between nodes $i$ and $j$, we obtain the Dynkin diagram of $E_8^{(1)}$ shown in Figure \ref{fig:dynkin_E8}.
\begin{remark}
From {\rm Pic}$(X)$ we can obtain the coroot lattice, weight lattice and root lattice in \S \ref{section:weight_lattice_E8}.
Indeed, the coroot lattice $\oc Q$ is given from {\rm Pic}$(X)$ by
\begin{equation}
 \oc\al_i=\beta_i,\quad i=0,\dots,8,
\end{equation}
the weight lattice $P$ is given from {\rm Pic}$(X)/(\bbZ\de)$ by
\begin{subequations}
\begin{align}
 &h_0\equiv -\mathcal{E}_8,\quad
 h_1\equiv -H_0,\quad
 h_2\equiv -H_0-H_1,\\
 &h_3\equiv -\mathcal{E}_3-\mathcal{E}_4-\mathcal{E}_5-\mathcal{E}_6-\mathcal{E}_7-\mathcal{E}_8,\\
 &h_4\equiv -\mathcal{E}_4-\mathcal{E}_5-\mathcal{E}_6-\mathcal{E}_7-\mathcal{E}_8,\quad
 h_5\equiv -\mathcal{E}_5-\mathcal{E}_6-\mathcal{E}_7-\mathcal{E}_8,\\
 &h_6\equiv -\mathcal{E}_6-\mathcal{E}_7-\mathcal{E}_8,\quad
 h_7\equiv -\mathcal{E}_7-\mathcal{E}_8,\\
 &h_8\equiv -H_0-H_1+\mathcal{E}_1,
\end{align}
\end{subequations}
and the root lattice $Q$ is given from {\rm Pic}$(X)/(\bbZ\de)$ by
\begin{equation}
 \al_i\equiv\beta_i,\quad i=0,\dots,8.
\end{equation}
Note that in this case we define the bilinear pairing $\langle\cdot,\cdot\rangle:\oc Q\times P\to\bbZ$\, 
for arbitrary $\oc\al\in\oc Q$ and $h\in P$ by
\begin{equation}
 \langle\oc\al,h\rangle=-(\oc\al|h).
\end{equation}
\end{remark}

Therefore, in a similar manner as in \S \ref{section:weight_lattice_E8}, we can define the transformation group $W(E_8^{(1)})$ as follows.

\begin{definition}[\cite{DO1988:MR1007155}]
\label{def:cremona}
An automorphism of {\rm Pic}(X) is called a Cremona isometry if it preserves 
\begin{enumerate}
\item
the intersection form $(\,|\,)$ on {\rm Pic}(X);
\item
the canonical divisor $K_X$;
\item
effectiveness of each effective divisor of {\rm Pic}(X).
\end{enumerate}
\end{definition}

It is well-known that the reflections are Cremona isometries.
In this case we define the reflections $s_i$, $i=0,\dots,8$, by the following linear actions:
\begin{equation}\label{eqn:def_si}
 s_i(v)=v+(v|\,\beta_i)\,\beta_i,
\end{equation}
for all $v\in \PicX$.
They collectively form an affine Weyl group of type $\EE8^{(1)}$, denoted by $\WEe$. 
Namely, we can easily verify that under the actions \eqref{eqn:def_si} the fundamental relations \eqref{eqns:fundamental_We8} hold.
Moreover, for each root $\beta\in Q(A_0^{(1)\bot})$, we can define the Kac translation $T_\beta$ on the Picard lattice by
\begin{equation}
 T_\beta(\la)=\la+(\de|\la)\beta-\left(\frac{(\beta|\beta)(\de|\la)}{2}+(\beta|\la)\right)\de,\quad
 \la\in \PicX.
\end{equation}

\section{Birational actions of the Cremona isometries for the Jacobi's setting}
\label{section:birational}
In this section, we give the birational actions of the Cremona isometries on the coordinates and parameters of the base points \eqref{eqn:basepoints}.
By using these birational actions, we reconstruct Equation \eqref{eqn:RCGeqn}. 

We focus on a particular example first to explain how to deduce such birational actions. 
Recall $H_0$ and $H_1$ are given by the linear equivalence classes of vertical lines $x=\text{constant}$ and horizontal lines $y=\text{constant}$, respectively. 
Applying the reflection operator $s_2$ given by \eqref{eqn:def_si} to $H_1$, we find $s_2(H_1)=H_0+H_1-\mathcal{E}_1-\mathcal{E}_2$,
which means that $s_2(y)$ can be described by the curve of bi-degree $(1,1)$ passing through base points $p_1$ and $p_2$ with multiplicity $1$.
 (See \cite{KNY2017:MR3609039} for for more detail.)
This result leads us to the birational action given below in Equation \eqref{eqn:action_E8_J_y}.
Similarly, from the linear actions of $s_i$, $i=0,\dots,8$, we obtain their birational actions on the coordinates and parameters of the base points \eqref{eqn:basepoints} as follows.
The actions of the generators of $\WEe$ on the coordinates $(x,y)$ are given by
\begin{subequations}\label{eqns:action_E8_J_para_xy}
\begin{align}
 &s_1(x)=y,\quad
 s_1(y)=x,\\
 &\bfrac{s_2(y)-\cd{2\eta-\frac{\cc1-\cc2}{2}}}{s_2(y)-\cd{2\eta+\frac{\cc1-\cc2}{2}}}
 \bfrac{x-\cd{\eta+\cc1}}{x-\cd{\eta+\cc2}}
 \bfrac{y-\cd{\eta-\cc2}}{y-\cd{\eta-\cc1}}\notag\\
 &=\bfrac{1-\dfrac{\cd{\eta-\cc2}}{\cd{\eta}}}{1-\dfrac{\cd{\eta-\cc1}}{\cd{\eta}}}
 \bfrac{1-\dfrac{\cd{\eta+\cc1}}{\cd{\eta}}}{1-\dfrac{\cd{\eta+\cc2}}{\cd{\eta}}}
 \bfrac{1-\dfrac{\cd{2\eta-\frac{\cc1-\cc2}{2}}}{\cd{\frac{\cc1+\cc2}{2}}}}{1-\dfrac{\cd{2\eta+\frac{\cc1-\cc2}{2}}}{\cd{\frac{\cc1+\cc2}{2}}}},
 \label{eqn:action_E8_J_y}
\end{align}
while those on the parameters $\cc{i}$, $i=1,\dots,8$, and $\eta$ are given by
\begin{align}
 &s_0(\cc7)=\cc8,\quad
 s_0(\cc8)=\cc7,\quad
 s_1(\eta)=-\eta,\quad
 s_2(\eta)=\eta-\frac{2\eta+\cc1+\cc2}{4},\\
 &s_2(\cc{i})=
 \begin{cases}
 \cc{i}-\dfrac{3(2\eta+\cc1+\cc2)}{4},& i=1,2,\\
 \cc{i}+\dfrac{2\eta+\cc1+\cc2}{4},& i=3,\dots,8,
 \end{cases}\\
 &s_k(\cc{k-1})=\cc{k},\quad 
 s_k(\cc{k})=\cc{k-1},\quad
 k=3,\dots,7,\\
 &s_8(\cc1)=\cc2,\quad
 s_8(\cc2)=\cc1.
\end{align}
\end{subequations}
Note that $\la=\sum_{i=1}^8\cc{i}$ is invariant under the action of $\WEe$.

For Jacobi's elliptic function $\cd{u}$ it is well known that shifts by half periods give the following relations:
\begin{equation}
 \cd{u+2K}=-\cd{u},\quad
 \cd{u+\iii K'}=\dfrac{1}{k\, \cd{u}}.
\end{equation}
These identities motivate our search for the transformations that are identity mappings on the $\PicX$.
Indeed, we define such transformations $\io_i$, $i=1,\dots,4$, by the following actions:
\begin{subequations}\label{eqns:action_iota}
\begin{align}
 &\io_1:(\cc{1},\dots,\cc{8},\eta,x,y)
 \mapsto\left(\cc{1}-\frac{\iii K'}{2},\dots,\cc{8}-\frac{\iii K'}{2},\eta-\frac{\iii K'}{2},\frac{1}{kx},y\right),\\
 &\io_2:(\cc{1},\dots,\cc{8},\eta,x,y)
 \mapsto\left(\cc{1}-\frac{\iii K'}{2},\dots,\cc{8}-\frac{\iii K'}{2},\eta+\frac{\iii K'}{2},x,\frac{1}{ky}\right),\\
 &\io_3:(\cc{1},\dots,\cc{8},\eta,x,y)
 \mapsto\left(\cc{1}-K,\dots,\cc{8}-K,\eta-K,-x,y\right),\\
 &\io_4:(\cc{1},\dots,\cc{8},\eta,x,y)
 \mapsto\left(\cc{1}-K,\dots,\cc{8}-K,\eta+K,x,-y\right).
\end{align}
\end{subequations}
Adding the transformations $\io_i$, we extend $\WEe$ to
\begin{equation}
 \tWEe=\lrangle{\io_1,\io_2,\io_3,\io_4}\rtimes\WEe.
\end{equation}
In general, for a function $F=F(c_i,\eta,x,y)$, we let an element
$w\in\tWEe$ act as $w.F=F(w.c_i,w.\eta,w.x,w.y)$, that is, 
$w$ acts on the arguments from the left. 
Under the birational actions \eqref{eqns:action_E8_J_para_xy} and \eqref{eqns:action_iota}, 
the generators of $\tWEe$ satisfy 
the fundamental relations of type $E_8^{(1)}$ \eqref{eqns:fundamental_We8}  and the following relations:
\begin{subequations}\label{eqns:relation_iota}
\begin{align}
 &(\io_i\io_j)^{m_{ij}}=1,\quad i,j=1,2,3,4,\quad
 \io_ks_l=s_l\io_k,\quad k=1,2,3,4,~ l\neq 1,2,\\
 &\io_{\{1,2,3,4\}}s_1=s_1\io_{\{2,1,4,3\}},\quad
 \io_1s_2=s_2\io_1\io_2,\quad
 \io_2s_2=s_2\io_2,\\
 &\io_3s_2=s_2\io_3\io_4,\quad
 \io_4s_2=s_2\io_4,
\end{align}
\end{subequations}
where 
\begin{equation}
m_{ij}=
\begin{cases}
 1,& i=j\\
 2, &\text{otherwise}.
\end{cases}
\end{equation}

Now we are in a position to derive Equation \eqref{eqn:RCGeqn} from the Cremona transformations.
Note that for convenience we use the notation \eqref{eqn:notation_composition_si} for the composition of the reflections $s_i$ and the notation
\begin{equation}\label{notation:c}
 \cc{j_1\cdots j_n}=\cc{j_1}+\cdots+\cc{j_n},\quad j_1\cdots j_n\in\{1,\dots,8\},
\end{equation}
for the summation of the parameters $\cc{i}$.
Let
\begin{equation}
 \RJ1=s_{56453483706756452348321 56453483706756452348321706734830468}\io_4\io_3\io_2\io_1.
\end{equation}
The action of $\RJ1$ on the root lattice $Q(A_0^{(1)\bot})$:
\begin{equation}
 \RJ1:
 \begin{pmatrix}
 \al_0\\\al_1\\\al_2\\\al_3\\\al_4\\\al_5\\\al_6\\\al_7\\\al_8
 \end{pmatrix}
 \mapsto
\left(\begin{array}{ccccccccc}
 -1 & 0 & 0 & 0 & 0 & 0 & 0 & 0 & 0 \\
 -1 & -1 & -4 & -6 & -5 & -4 & -3 & -2 & -3 \\
 0 & 0 & 1 & 2 & 1 & 0 & 0 & 0 & 1 \\
 0 & 0 & 0 & -1 & 0 & 0 & 0 & 0 & 0 \\
 0 & 0 & 0 & 0 & -1 & 0 & 0 & 0 & 0 \\
 1 & 1 & 2 & 4 & 4 & 3 & 3 & 2 & 2 \\
 0 & 0 & 0 & 0 & 0 & 0 & -1 & 0 & 0 \\
 0 & 0 & 0 & 0 & 0 & 0 & 0 & -1 & 0 \\
 0 & 0 & 0 & 0 & 0 & 0 & 0 & 0 & -1 \\
\end{array}\right)
 \begin{pmatrix}
 \al_0\\\al_1\\\al_2\\\al_3\\\al_4\\\al_5\\\al_6\\\al_7\\\al_8
 \end{pmatrix},
\end{equation}
is not translational, and that on the parameter space:
\begin{align}
 &\RJ1(\cc{i})=-\cc{i}+\frac{\cc{1234}-\cc{5678}}{4}-\ka,\quad i=1,\dots,4,\\
 &\RJ1(\cc{j})=-\cc{j}+\frac{\cc{1234}+3\cc{5678}}{4}-\ka,\quad j=5,\dots,8,\quad
 \RJ1(\eta)=\eta+\frac{\la}{2},
\end{align}
where $\ka$ is defined by \eqref{eqn:kappa}, is also not translational.
However, when the parameters take special values \eqref{eqns:condition_RCG},
the action of $\RJ1$ becomes the translational motion in the parameter subspace:
\begin{equation}\label{eqn:action_RJ1_special_para}
 \RJ1:(\gae,\gao,z_0)\mapsto(\gae,\gao,z_0+2(\gae+\gao)-2\ka),
\end{equation}
and then the action on the coordinates $(x,y)$ with 
$x_n=\RJ1^n(x)$, $y_n=\RJ1^n(y)$, $z_n=\RJ1^n(z_0)$,
gives the time evolution of Equation \eqref{eqn:RCGeqn}, that is, $\RJ1=\phi$.
Note that we can without loss of generality ignore ``$2\ka$" in 
\begin{equation}
 \RJ1(z_0)=2(\gae+\gao)-2\ka,
\end{equation}
since of the form of Equation \eqref{eqn:RCGeqn} and the following relations: 
\begin{equation}
 \sn{u+2\ka}=\sn{u},\quad
 \cn{u+2\ka}=-\cn{u},\quad
 \dn{u+2\ka}=-\dn{u}.
\end{equation}

\section{Special solutions of the RCG equation}
\label{section:Special sols RCG equation}
In this section, we show the special solutions of the RCG equation.

Let us consider Equation \eqref{eqn:RCGeqn} under the following condition:
\begin{equation}
 \gao=\dfrac{\iii K'}{2},
\end{equation}
which gives
\begin{equation}
 \sgo=\dfrac{\iii}{k^{1/2}},\quad
 \cgo=\dfrac{(1+k)^{1/2}}{k^{1/2}},\quad
 \dgo=(1+k)^{1/2}.
\end{equation}
Then, the base points $p_i$, $i=1,2,3,4$, in \eqref{eqn:basepoints_RCG} can be expressed by
\begin{subequations}
\begin{align}
 &p_1:(x,y)=\left(-\dfrac{1}{k^{1/2}},-\cd{z_0-\gae+\dfrac{\iii K'}{2}}\right),\\
 &p_2:(x,y)=\left(\dfrac{1}{k^{1/2}},\cd{z_0-\gae+\dfrac{\iii K'}{2}}\right),\\
 &p_3:(x,y)=\left(-\dfrac{1}{k^{1/2}},-\cd{z_0-\gae-\dfrac{\iii K'}{2}}\right),\\
 &p_4:(x,y)=\left(\dfrac{1}{k^{1/2}},\cd{z_0-\gae-\dfrac{\iii K'}{2}}\right).
\end{align}
\end{subequations}
This means that 
there exist the bi-degree $(1,0)$ curve $x=-k^{-1/2}$, passing through $p_1$ and $p_3$, and  the bi-degree $(1,0)$ curve $x=k^{-1/2}$, passing through $p_2$ and $p_4$,
which correspond to $H_0-\mathcal{E}_1-\mathcal{E}_3$ and $H_0-\mathcal{E}_2-\mathcal{E}_4$, respectively.
Moreover, the action
\begin{equation}
 \phi:~H_0-\mathcal{E}_1-\mathcal{E}_3\leftrightarrow H_0-\mathcal{E}_2-\mathcal{E}_4,
\end{equation}
implies that there exist the special solutions when
\begin{equation}
 x_n=\pm\dfrac{(-1)^n}{k^{1/2}}.
\end{equation}
Therefore, we obtain the following lemma.
\begin{lemma}\label{lemma:special_sol}
The following are special solutions of Equation \eqref{eqn:RCGeqn}$:$
\begin{subequations}\label{eqns:speacil_sol_1}
\begin{align}
 &(x_n,y_n)=\left(\dfrac{(-1)^n}{k^{1/2}},\dfrac{(-1)^n}{k^{1/2}}\right),
 &&(x_n,y_n)=\left(\dfrac{(-1)^n}{k^{1/2}},\dfrac{(-1)^{n+1}}{k^{1/2}}\right),\\
 &(x_n,y_n)=\left(\dfrac{(-1)^{n+1}}{k^{1/2}},\dfrac{(-1)^n}{k^{1/2}}\right),
 &&(x_n,y_n)=\left(\dfrac{(-1)^{n+1}}{k^{1/2}},\dfrac{(-1)^{n+1}}{k^{1/2}}\right),
\end{align}
\end{subequations}
and
\begin{equation}
 (x_n,y_n)=\left(\dfrac{(-1)^n}{k^{1/2}},\dfrac{\iii\,\tan(u_n)}{k^{1/2}}\right),\quad
 (x_n,y_n)=\left(\dfrac{(-1)^{n+1}}{k^{1/2}},-\dfrac{\iii\,\tan(u_n)}{k^{1/2}}\right),
\end{equation}
where $\iii=\sqrt{-1}$ and $u_n$ is the solution of the following linear equation:
\begin{equation}\label{eqn:linear_un}
 u_{n+1}+u_n=\tan^{-1}\left(-\iii\,\dfrac{(1-k\,\sge^2)\cz_n\dz_n}{\cge \dge(1-k\, \sz_n^2)}\right).
\end{equation}
\end{lemma}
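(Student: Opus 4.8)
The plan is to take advantage of the geometric observation made just above the statement: once $\gao=\iii K'/2$, so that $\sgo=\iii/k^{1/2}$, $\cgo=(1+k)^{1/2}/k^{1/2}$, $\dgo=(1+k)^{1/2}$, the base points $p_1,p_3$ of \eqref{eqn:basepoints_RCG} lie on the $(1,0)$-curve $\{x=-k^{-1/2}\}$ of class $H_0-\mathcal{E}_1-\mathcal{E}_3$, while $p_2,p_4$ lie on $\{x=k^{-1/2}\}$ of class $H_0-\mathcal{E}_2-\mathcal{E}_4$, and $\phi$ exchanges these two curves. Hence the locus on which $x_n\in\{k^{-1/2},-k^{-1/2}\}$ and $x_{n+1}=-x_n$ is invariant under \eqref{eqn:RCGeqn} (and under $\phi^{-1}$). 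So the whole lemma reduces to two checks: (a) on this locus the second equation of \eqref{eqn:RCGeqn} is automatically consistent with $x_{n+1}=-x_n$, for every value of $y_{n+1}$; and (b) the first equation of \eqref{eqn:RCGeqn} restricts to a Möbius recursion in $y_n$ that the substitution $y_n=\iii\tan(u_n)/k^{1/2}$ converts into the linear equation \eqref{eqn:linear_un}, the four solutions in \eqref{eqns:speacil_sol_1} being the special orbits of that recursion.

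First I would do (a): substitute $x_n=\pm k^{-1/2}$, hence $x_n^2=1/k$, together with $x_{n+1}=-x_n$, into the second equation of \eqref{eqn:RCGeqn} and collect powers of $y_{n+1}$. The coefficient of $y_{n+1}$ cancels identically, while the coefficients of $y_{n+1}^2$ and of $1$ are both, up to the nonzero factor $\hcz_n\hdz_n$, equal to $k(\cgo^2-\hcz_n^2)-(1-k^2\sgo^2\hsz_n^2)$; using $\sgo^2=-1/k$ and $\cgo^2=(1+k)/k$ one gets $1-k^2\sgo^2\hsz_n^2=1+k\hsz_n^2=k(\cgo^2-\hcz_n^2)$, so this expression vanishes and the second equation becomes an identity in $y_{n+1}$. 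This is the step where the value $\gao=\iii K'/2$ is genuinely used.

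Next I would do (b): substitute $x_n^2=1/k$ into the first equation of \eqref{eqn:RCGeqn}. Using $\cge^2-\cz_n^2=\sz_n^2-\sge^2$ gives the factorizations $k(\cge^2-\cz_n^2)+(1-k^2\sge^2\sz_n^2)=(1-k\sge^2)(1+k\sz_n^2)$ and $1-k^2\sz_n^4=(1-k\sz_n^2)(1+k\sz_n^2)$; after cancelling the common factor $(1+k\sz_n^2)$ the first equation collapses to the bilinear relation
\[
 (1-k\sge^2)\,\cz_n\dz_n\,(1+k\,y_{n+1}y_n)=k\,x_n\,(1-k\sz_n^2)\,\cge\dge\,(y_{n+1}+y_n).
\]
Putting $y_m=\iii\tan(u_m)/k^{1/2}$ (with the sign of $y$ tied to the sign of $x$ as in the two $\tan$-families of the statement) turns $1+ky_{n+1}y_n$ into $1-\tan u_{n+1}\tan u_n$ and $y_{n+1}+y_n$ into $(\iii/k^{1/2})(\tan u_{n+1}+\tan u_n)$, so the relation becomes the tangent addition formula
\[
 \tan(u_{n+1}+u_n)=\frac{\tan u_{n+1}+\tan u_n}{1-\tan u_{n+1}\tan u_n}
 =-\iii\,\frac{(1-k\sge^2)\,\cz_n\dz_n}{\cge\dge\,(1-k\sz_n^2)},
\]
which is \eqref{eqn:linear_un}. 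The four constant-modulus solutions in \eqref{eqns:speacil_sol_1} then come for free: they are the orbits of this recursion through $u=\pm\iii\infty$, where $\tan u=\mp\iii$ and hence $y_n=\pm(-1)^n/k^{1/2}$, which alternate sign under $u_{n+1}=(\text{right-hand side})-u_n$; alternatively one checks them by direct substitution into \eqref{eqn:RCGeqn}.

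The hard part will be the last identification: because $z_n=z_0+2\gae n+\iii n K'$, the functions $\sz_n$, $\cz_n$, $\dz_n$ transform nontrivially between even and odd $n$ under the shift by $\iii K'$, so the apparent parity factor $k x_n=\pm(-1)^n k^{1/2}$ sitting in the bilinear relation above must be reconciled, via the half-period shift formulas for $\sn,\cn,\dn$ applied to $z_n$, with the coefficient displayed in \eqref{eqn:linear_un}. Once this $n$-dependence is handled, everything else — the two factorizations, the special values of $\sn,\cn,\dn$ at $\iii K'/2$, and the bilinear-to-tangent computation — is routine.
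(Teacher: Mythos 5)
Your strategy coincides with the paper's own proof: restrict to $x_n=\pm(-1)^nk^{-1/2}$, reduce the first equation of \eqref{eqn:RCGeqn} to a Riccati-type bilinear relation for $y$, read off the four constant solutions \eqref{eqns:speacil_sol_1} from the degenerate case $1+k\,y_{n+1}y_n=0$ (the paper does this by exactly that case split, which is cleaner than your ``orbit through $u=\pm\iii\infty$'' remark, though your fallback of direct substitution is fine), handle the family $x_n=(-1)^{n+1}k^{-1/2}$ by $y\mapsto -y$, and linearize the rest by $y_n=\iii\tan(u_n)/k^{1/2}$ and the tangent addition formula. Your step (a) is correct and is in fact more explicit than the paper, which never writes out the consistency of the second equation: with $x_n^2=1/k$ and $x_{n+1}=-x_n$ it collapses to $\big[(1-k^2\sgo^2\hsz_n^2)-k(\cgo^2-\hcz_n^2)\big]\,\big(y_{n+1}^2-\tfrac1k\big)\,\hcz_n\hdz_n=0$, and the bracket vanishes precisely because $\gao=\iii K'/2$. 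Your factorization in step (b) is also correct and reproduces the paper's Riccati equation \eqref{eqn:riccati_y}, except for the factor $kx_n$ that you keep explicit.

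The genuine gap is exactly the step you postpone as ``the hard part'', and the mechanism you propose for it does not work. Your bilinear relation retains the factor $kx_n=(-1)^nk^{1/2}$, so the tangent substitution yields $\tan(u_{n+1}+u_n)=(-1)^nA_n$ with $A_n$ as in \eqref{eqn:def_An}, not yet \eqref{eqn:linear_un}; and no compensating parity can come from the half-period shift formulas you invoke, because the combination entering the coefficient is invariant under $z\mapsto z+\iii K'$: from $\cn{u+\iii K'}\,\dn{u+\iii K'}=-\cn{u}\dn{u}/(k\,\ssn{u})$ and $1-k\,\ssn{u+\iii K'}=-(1-k\,\ssn{u})/(k\,\ssn{u})$ one gets $\cn{u+\iii K'}\dn{u+\iii K'}/\big(1-k\,\ssn{u+\iii K'}\big)=\cn{u}\dn{u}/\big(1-k\,\ssn{u}\big)$, so $A_n$ carries no hidden $(-1)^n$. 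Thus, as written, your argument establishes $u_{n+1}+u_n=\tan^{-1}\big((-1)^nA_n\big)$, and the proof is not complete until this alternating sign is settled — either absorbed (e.g.\ via $u_n\mapsto(-1)^nu_n$, which turns the sum into a difference equation and changes the displayed form) or otherwise accounted for. The paper's proof gives you no help here: it simply asserts the reduction of \eqref{eqn:RCGeqn} under $x_n=(-1)^n/k^{1/2}$ to \eqref{eqn:riccati_y}, with no parity factor shown, so reconciling your (correct) bilinear relation with that assertion is precisely the point you flagged and left open.
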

\begin{proof}
Under the condition 
\begin{equation}\label{eqn:xn_(-1)n}
 x_n=\dfrac{(-1)^n}{k^{1/2}},
\end{equation}
Equation \eqref{eqn:RCGeqn} are reduced to the following discrete Riccati equation:
\begin{equation}\label{eqn:riccati_y}
 y_{n+1}+y_n=\dfrac{\iii\,A_n}{k^{1/2}}(1+k\, y_{n+1}y_n),
\end{equation}
where $A_n$ is given by 
\begin{equation}\label{eqn:def_An}
 A_n=-\iii\,\dfrac{(1-k\,\sge^2)\cz_n\dz_n}{\cge \dge(1-k\, \sz_n^2)}.
\end{equation}

Note that under the condition 
\begin{equation}
 x_n=\dfrac{(-1)^{n+1}}{k^{1/2}},
\end{equation}
Equation \eqref{eqn:RCGeqn} can be reduced to 
\begin{equation}
 y_{n+1}+y_n=-\dfrac{\iii\,A_n}{k^{1/2}}(1+k\, y_{n+1}y_n),
\end{equation}
which can be rewritten as the discrete Riccati equation \eqref{eqn:riccati_y} by the transformation $y_n\mapsto -y_n$.
Therefore, it is enough for us to just consider the case \eqref{eqn:xn_(-1)n}.

Let us consider the solutions of the discrete Riccati equation \eqref{eqn:riccati_y}.
If 
\begin{equation}
 1+k\, y_{n+1}y_n=0,
\end{equation}
then we obtain
\begin{equation}
 y_{n+1}+y_n=0.
\end{equation}
Therefore, we obtain 
\begin{equation}
 y_n=\dfrac{(-1)^n}{k^{1/2}},~\dfrac{(-1)^{n+1}}{k^{1/2}},
\end{equation}
which gives the special solutions \eqref{eqns:speacil_sol_1}.
In the following we assume
\begin{equation}
 1+k\, y_{n+1}y_n\neq0.
\end{equation}
Then, the discrete Riccati equation \eqref{eqn:riccati_y} can be rewritten as the following:
\begin{equation}\label{eqn:riccati_y_2}
 \dfrac{y_{n+1}+y_n}{1+k\, y_{n+1}y_n}=\dfrac{\iii\, A_n}{k^{1/2}}.
\end{equation}
By letting
\begin{equation}
 y_n=\dfrac{\iii\,\tan(u_n)}{k^{1/2}},
\end{equation}
and using the tangent addition formula, 
the discrete Riccati equation \eqref{eqn:riccati_y_2} can be rewritten as
\begin{equation}
 \tan(u_{n+1}+u_n)=A_n,
\end{equation}
which gives the linear equation \eqref{eqn:linear_un}.
Therefore, we have completed the proof.
\end{proof}

\section*{Acknowledgments}
This research was supported by an Australian Laureate Fellowship \# FL120100094 and grant \# DP160101728 from the Australian Research Council and JSPS KAKENHI Grant Number JP17J00092.
\appendix
\section{$A_4^{(1)}$-lattice}
\label{appendix:A4}
In this section, we give a more detailed description of the weight lattice and affine Weyl group by using the lattice of type $A_4^{(1)}$ as an example.

We consider the following $\bbZ$-modules:
\begin{equation}
 \oc Q=\sum_{k=0}^4\bbZ\oc\al_k,\quad
 P=\sum_{k=0}^4\bbZ h_k,
\end{equation}
with the bilinear pairing $\langle\cdot,\cdot\rangle:\oc Q\times P\to\bbZ$ defined by
\begin{equation}\label{eqn:def_dual_A4}
 \langle\oc\al_i,h_j\rangle=\de_{ij},\quad 0\leq i, j \leq 4.
\end{equation}
We also define the submodule of $P$ by
$Q=\sum_{k=0}^4\bbZ\al_k$,
where $\al_i$, $i=0,\dots,4$, are defined by
\begin{equation}\label{eqn:def_alpha_A4}
 \begin{pmatrix}
 \al_0\\ \al_1\\ \al_2\\ \al_3\\ \al_4
 \end{pmatrix} 
 =(A_{ij})_{i, j=0}^4
 \begin{pmatrix}
 h_0\\h_1\\h_2\\h_3\\h_4
 \end{pmatrix},
\end{equation}
and satisfy
\begin{equation}\label{eqn:root_coroot_A4}
 \langle\oc\al_i,\al_j\rangle=A_{ij}.
\end{equation}
Here, $(A_{ij})_{i, j=0}^4$ is the Generalized Cartan matrix of type $A_4^{(1)}$:
\begin{equation}\label{eqn:cartan_A4}
 (A_{ij})_{i, j=0}^4
 =\begin{pmatrix}
 2&-1&0&0&-1\\-1&2&-1&0&0\\0&-1&2&-1&0\\0&0&-1&2&-1\\-1&0&0&-1&2
 \end{pmatrix}.
\end{equation}
Then, the generators $\{\oc\al_0,\dots,\oc\al_4\}$, $\{h_0,\dots,h_4\}$ and $\{\al_0,\dots,\al_4\}$
are identified with simple coroots, fundamental weights and simple roots of type $A_4^{(1)}$, respectively.
We note that the following relation holds:
\begin{equation}\label{eqn:cond_root_al_A4}
 \al_0+\al_1+\al_2+\al_3+\al_4=0,
\end{equation}
and we call the corresponding coroot as null-coroot denoted by $\oc\de$:
\begin{equation}
 \oc\de=\oc\al_0+\oc\al_1+\oc\al_2+\oc\al_3+\oc\al_4.
\end{equation} 
In the following subsections, we consider the transformation group acting on theses lattices.
\subsection{Affine Weyl group of type $A_4^{(1)}$}
In this section, we consider the transformations which collectively form an affine Weyl group of type $A_4^{(1)}$.

We define the transformations $s_i$, $i=0,\dots,4$, by the reflections for the roots $\{\al_0,\dots,\al_4\}$:
\begin{equation}\label{eqn:s_weight_A4}
 s_i(\la)=\la-\langle\oc\al_i,\la\rangle\al_i,\quad i=0,\dots,4,\quad \la\in P,
\end{equation}
which give
\begin{equation}\label{eqn:s_h_A4}
\begin{split}
 &s_0(h_0)=-h_0+h_1+h_4,\quad
 s_1(h_1)=h_0-h_1+h_2,\quad
 s_2(h_2)=h_1-h_2+h_3,\\
 &s_3(h_3)=h_2-h_3+h_4,\quad
 s_4(h_4)=h_0+h_3-h_4.
\end{split}
\end{equation}
From definitions \eqref{eqn:def_dual_A4}, \eqref{eqn:def_alpha_A4}, \eqref{eqn:root_coroot_A4} and \eqref{eqn:s_weight_A4}, 
we can compute actions on the simple roots $\al_i$, $i=0,\dots,4$, as the following:
\begin{equation}\label{eqn:action_A4_alpha}
s_i(\al_j)
=\begin{cases}
 -\al_j,& i=j\\
 \al_j+\al_i, &i=j\pm1\quad \text{(mod 5)}\\
 \al_j, &i=j\pm2\quad \text{(mod 5)}.
\end{cases}
\end{equation}
Under the linear actions on the weight lattice \eqref{eqn:s_h_A4},
$W(A_4^{(1)})=\langle s_0,s_1,s_2,s_3,s_4\rangle$ forms an affine Weyl group of type $A_4^{(1)}$, that is,
the following fundamental relations hold:
\begin{equation}
 (s_is_j)^{l_{ij}}=1,\quad
 \text{where}\quad
 l_{ij}=
\begin{cases}
 1,& i=j\\
 3, &i=j\pm1\quad \text{(mod 5)}\\
 2, &i=j\pm2\quad \text{(mod 5)}.
\end{cases} 
\end{equation}
Note that representing the simple reflections $s_i$ by nodes and connecting $i$-th and $j$-th nodes by $(l_{ij}-2)$ lines,
we obtain the Dynkin diagram of type $A_4^{(1)}$ shown in Figure \ref{fig:dynkin_A4}.

\begin{figure}[htbp]
\begin{center}
\includegraphics[width=0.32\textwidth]{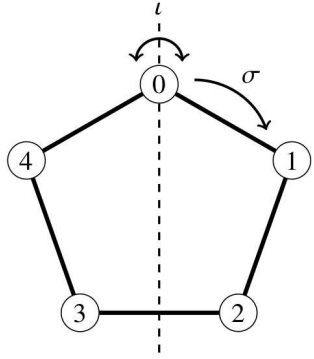}
\end{center}
\caption{Dynkin diagram of type $A_4^{(1)}$. The transformation $\iota$ is the reflection with respect to the dotted line, and the transformation $\si$ is the rotation symmetry with respect to an angle of $2\pi/5$ in a clockwise manner.}
\label{fig:dynkin_A4}
\end{figure}

\begin{remark}
We can also define the action of $W(A_4^{(1)})$ on the coroot lattice $\oc Q$ by replacing $\al_i$ with $\oc\al_i$ in the actions \eqref{eqn:action_A4_alpha}.  
Note that $\oc\de$ is invariant under the action of $W(A_4^{(1)})$.
Then, the transformations in $W(A_4^{(1)})$ preserve the form $\langle\cdot,\cdot\rangle$.
\end{remark}

Let $M_i$, $i=0,\dots,4$, be the orbits of $h_i$, $i=0,\dots,4$, defined by  
\begin{equation}\label{eqn:def_M_i}
 M_i=\left\{w(h_i)\,\left|\, w\in W(A_4^{(1)})\right\}\right.,
\end{equation}
and $T_i$, $i=0,\dots,4$, be the transformations defined by
\begin{subequations}\label{eqn:translation_Ti_A4}
\begin{align}
 &T_0=s_{04321234},\quad
 T_1=s_{10432340},\quad
 T_2=s_{21043401},\\
 &T_3=s_{32104012},\quad
 T_4=s_{43210123},
\end{align}
\end{subequations}
whose actions on the fundamental weights and simple roots are given by
\begin{equation}
 T_i(h_j)=h_j-\al_i,\quad
 T_i(\al_j)=\al_j,\quad
 \quad i,j=0,\dots,4.
\end{equation}
Note that $T_0\circ T_1\circ T_2\circ T_3\circ T_4=1$.
Then, the following lemma holds.

\begin{lemma}\label{lemma:orbits}
The following hold:
\begin{equation}
 M_i=\left\{h_i+\al\,\left|\, \al\in Q \right\}\right.,\quad i=0,\dots,4.
\end{equation}
\end{lemma}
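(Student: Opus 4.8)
The plan is to prove the two inclusions $M_i \subseteq \{h_i + \al \mid \al \in Q\}$ and $\{h_i + \al \mid \al \in Q\} \subseteq M_i$ separately, with the first being essentially immediate and the second being the substantive content. For the forward inclusion, I would argue that each generator $s_j$ sends $h_i$ to $h_i$ plus an element of $Q$: indeed, from $s_j(\la) = \la - \langle \oc\al_j, \la\rangle \al_j$ in \eqref{eqn:s_weight_A4} we have $s_j(h_i) = h_i - \langle \oc\al_j, h_i\rangle \al_j = h_i - \de_{ij}\al_j$ by \eqref{eqn:def_dual_A4}. Since $Q$ is closed under the action of $W(A_4^{(1)})$ (each $s_j$ permutes the $\al_k$ up to sign and integer combinations, by \eqref{eqn:action_A4_alpha}), an easy induction on word length shows that for any $w \in W(A_4^{(1)})$ one has $w(h_i) = w'(h_i) - \langle \oc\al_j, w'(h_i)\rangle\, w'(\al_j)$ where $w = s_j w'$; since $w'(h_i) \in h_i + Q$ by induction and $w'(\al_j) \in Q$, we get $w(h_i) \in h_i + Q$. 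Hence $M_i \subseteq h_i + Q$.

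For the reverse inclusion, the key observation is that the translations $T_0, \dots, T_4$ of \eqref{eqn:translation_Ti_A4} lie in $W(A_4^{(1)})$ and satisfy $T_k(h_i) = h_i - \al_k$ for all $i$. Therefore, starting from $h_i \in M_i$ and applying words in the $T_k^{\pm 1}$, we reach $h_i - \sum_k m_k \al_k$ for any integers $m_k$; but since $Q = \sum_{k=0}^4 \bbZ\al_k$, every element of $Q$ is of the form $-\sum_k m_k \al_k$ with $m_k \in \bbZ$ (note $\al_0 + \dots + \al_4 = 0$ by \eqref{eqn:cond_root_al_A4} only imposes a relation, it does not shrink the image set — the map $(m_0,\dots,m_4) \mapsto \sum m_k \al_k$ is surjective onto $Q$). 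Hence every $h_i + \al$ with $\al \in Q$ is in the $W(A_4^{(1)})$-orbit of $h_i$, i.e., in $M_i$. More precisely: given $\al \in Q$, write $\al = \sum_k m_k \al_k$; then $T_0^{-m_0} \circ \cdots \circ T_4^{-m_4}(h_i) = h_i + \sum_k m_k \al_k = h_i + \al$, and this composition lies in $W(A_4^{(1)})$, so $h_i + \al \in M_i$.

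The main obstacle — though it is minor — is bookkeeping: one must verify that the stated words $T_k = s_{\cdots}$ in \eqref{eqn:translation_Ti_A4} genuinely act on fundamental weights by $T_k(h_j) = h_j - \al_k$ and on roots trivially, so that they are honest Kac translations as in \S\ref{section:weight_lattice_E8}; this is a finite computation using \eqref{eqn:s_h_A4} and \eqref{eqn:action_A4_alpha}, and the relation $T_0 \circ T_1 \circ \cdots \circ T_4 = 1$ is a useful consistency check. Once that is in hand, the proof is a clean two-way inclusion. I would also remark that the forward inclusion alone shows $M_i$ is contained in a single coset of $Q$ in $P$, and since the $h_i$ are distinct modulo $Q$ (as they project to distinct elements in $P/Q \cong \bbZ/5\bbZ$), the orbits $M_0, \dots, M_4$ are pairwise disjoint and partition $P$ — a corollary worth noting even if not strictly required by the statement.
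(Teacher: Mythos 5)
Your proof is correct and takes essentially the same route as the paper: the forward inclusion is read off from the reflection formula \eqref{eqn:s_weight_A4}, and the reverse inclusion is obtained by composing powers of the translations $T_0,\dots,T_4$ to reach $h_i+\al$ for any $\al\in Q$, exactly as the paper does for $i=0$ (with the remaining cases handled "in a similar manner"). Your extra remarks on spelling out the induction for the forward inclusion and on the disjointness of the cosets are harmless elaborations, not a different method.
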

\begin{proof}
The relation
$M_0\subset\left\{h_0+\al\,\left|\, \al\in Q \right\}\right.$
is obvious from the definition \eqref{eqn:s_weight_A4},
and the relation
$M_0\supset\left\{h_0+\al\,\left|\, \al\in Q \right\}\right.$
follows from
\begin{equation}
 h_0+\sum_{i=0}^4k_i\al_i={T_0}^{k_0}\circ{T_1}^{k_1}\circ{T_2}^{k_2}\circ{T_3}^{k_3}\circ{T_4}^{k_4}(h_0),
\end{equation}
where $k_i\in\bbZ$.
Therefore, the case $i=0$ holds.
In a similar manner, we can also prove the statements for $M_i$ $i=1,\dots,4$. 
Therefore, we have completed the proof.
\end{proof}

Since of Lemma \ref{lemma:orbits}, we can express the lattices $M_i$ by
\begin{equation}
 M_i=\left\{T(h_i)\,\left|\, T\in \langle T_1,\dots,T_4\rangle\right\}\right.,\quad 
 i=0,\dots,4.
\end{equation}
In general, the translations on the orbits $M_i$, $i=0,\dots,4$, are given by the Kac translations defined on the weight lattice $P$:
\begin{equation}
 T_\al(\la)=\la-\langle\oc\de,\la\rangle\al,\quad
 \la\in P,\quad \al\in Q.
\end{equation}
The translations $T_i$, $i=0,\dots,4$, can be expressed by the Kac translations as the following:
\begin{equation}
 T_i=T_{\al_i},\quad i=0,\dots,4.
\end{equation}
Note that the translations $T_i$, $i=0,\dots,4$, are called the fundamental translations on the weight lattice $P$ or in affine Weyl group $W(A_4^{(1)})$.
Indeed, since of the following property of the Kac translations:
\begin{equation}
 T_\al\circ T_\beta=T_{\al+\beta},
\end{equation}
where $\al,\beta\in Q$,
all Kac translations in $W(A_4^{(1)})$ can be expressed by the compositions of $T_i$, $i=0,\dots,4$.

In this case we do not have translations moving a fundamental weight $h_i$ to another fundamental weight $h_j$. 
However, by extending $W(A_4^{(1)})$ with the automorphisms of the Dynkin diagram, we can define such translations as shown in the following subsection.

\subsection{Extended affine Weyl group of type $A_4^{(1)}$}
In this section, we consider the extended affine Weyl group of type $A_4^{(1)}$.

We define the transformations $\si$ and $\iota$ by
\begin{subequations}\label{eqn:dynkinauto_A4_weight}
\begin{align}
 &\si:h_0\to h_1,\quad
 h_1\to h_2,\quad
 h_2\to h_3,\quad
 h_3\to h_4,\quad
 h_4\to h_0,\\
 &\iota:h_1\leftrightarrow h_4,\quad h_2\leftrightarrow h_3,
\end{align}
\end{subequations}
whose actions on the root lattice are given by
\begin{subequations}\label{eqn:dynkinauto_A4_root}
\begin{align}
 &\si:\al_0\to \al_1,\quad
 \al_1\to \al_2,\quad
 \al_2\to \al_3,\quad
 \al_3\to \al_4,\quad
 \al_4\to \al_0,\\
 &\iota:\al_1\leftrightarrow \al_4,\quad \al_2\leftrightarrow \al_3.
\end{align}
\end{subequations}
Moreover, we also define their actions on the coroot lattice $\oc Q$ by replacing $\al_i$ with $\oc\al_i$ in the actions \eqref{eqn:dynkinauto_A4_root}.
Then, the transformations $\si$ and $\iota$ satisfy
\begin{equation}
 \sigma^5=\iota^2=1,\quad
 \sigma\circ\iota=\iota\circ\sigma^{-1},
\end{equation}
and the relations with $W(A_4^{(1)})=\langle s_0,s_1,s_2,s_3,s_4\rangle$ are given by
\begin{equation}
 \sigma\circ s_i=s_{i+1}\circ\sigma,\quad
 \iota\circ s_i=s_{-i}\circ\iota,
\end{equation}
that is, the transformations $\si$ and $\iota$ are automorphisms of the Dynkin diagram of type $A_4^{(1)}$ (see Figure \ref{fig:dynkin_A4}).
Therefore, we call 
\begin{equation}
 \tW(A_4^{(1)})=W(A_4^{(1)})\rtimes\langle \si,\iota\rangle
\end{equation}
as the extended affine Weyl group of type $A_4^{(1)}$.

\begin{remark}
We can easily verify that $\oc\de$ is also invariant under the action of $\tW(A_4^{(1)})$, 
and the transformations in $\tW(A_4^{(1)})$ preserve the form $\langle\cdot,\cdot\rangle$.
\end{remark}

Let $M$ be the orbits of $h_0$ defined by  
\begin{equation}\label{eqn:def_M0_extended}
 M=\left\{w(h_0)\,\left|\, w\in \tW(A_4^{(1)})\right\}\right..
\end{equation}
Moreover, we also define the following transformations:
\begin{subequations}\label{eqn:translation_Tij_A4}
\begin{align}
 &T_{01}=\si^4 s_{2340},\quad
 T_{12}=\si^4 s_{3401},\quad
 T_{23}=\si^4 s_{4012},\\
 &T_{34}=\si^4 s_{0123},\quad
 T_{40}=\si^4 s_{1234},
\end{align}
\end{subequations}
whose actions on the fundamental weights are translational as the following:
\begin{equation}
 T_{i\, i+1}(h_j)=h_j+v_{i\, i+1},\quad
 T_{i\, i+1}(v_{j\,j+1})=v_{j\,j+1},\quad
 i,j\in\bbZ/(5\bbZ),
\end{equation}
where
\begin{equation}
 v_{01}=h_0-h_1,~
 v_{12}=h_1-h_2,~
 v_{23}=h_2-h_3,~
 v_{34}=h_3-h_4,~
 v_{40}=h_4-h_0.
\end{equation}
Note that
\begin{align}
 &\begin{pmatrix}
 \al_0\\\al_1\\\al_2\\\al_3\\\al_4
 \end{pmatrix}
 =\left(\begin{array}{ccccc}
 1&0&0&0&-1\\
 -1&1&0&0&0\\
 0&-1&1&0&0\\
 0&0&-1&1&0\\
 0&0&0&-1&1
 \end{array}\right).
 \begin{pmatrix}
 v_{01}\\v_{12}\\v_{23}\\v_{34}\\v_{40}
 \end{pmatrix},\\
 & \begin{pmatrix}
 v_{01}\\v_{12}\\v_{23}\\v_{34}\\v_{40}
 \end{pmatrix}
 =\frac{1}{5}\left(\begin{array}{ccccc}
 4&0&1&2&3\\
 3&4&0&1&2\\
 2&3&4&0&1\\
 1&2&3&4&0\\
 0&1&2&3&4
 \end{array}\right).
 \begin{pmatrix}
 \al_0\\\al_1\\\al_2\\\al_3\\\al_4
 \end{pmatrix},\\
 &v_{01}+v_{12}+v_{23}+v_{34}+v_{40}=0,\\
 &T_{01}\circ T_{12}\circ T_{23}\circ T_{34}\circ T_{40}=1.
\end{align}
In a similar manner as the proof of Lemma \ref{lemma:orbits}, by using the translations $T_{ij}$ we can prove the following:
\begin{equation}
 M=\left\{h_0+v\,\left|\, v\in V \right\}\right.
 =\left\{T(h_0)\,\left|\, T\in \langle T_{01},T_{12},T_{23},T_{34},T_{40}\rangle\right\}\right.,
\end{equation}
where $V=\bbZ v_{01}+\bbZ v_{12}+\bbZ v_{23}+\bbZ v_{34}+\bbZ v_{40}$.
Note that since $T_{v_{i\, i+1}}:h_{i+1}\mapsto h_i$,
the following hold:
\begin{equation}
 h_i\in M,\quad i=0,\dots,4.
\end{equation}


The translations on the weight lattice $P$ spanned by $V$ is given by
\begin{equation}
 T_v(\la)=\la-\langle\oc\de,\la\rangle v,\quad
 \la\in P,\quad v\in V.
\end{equation}
In this case, the translations $T_{i\, i+1}$, $i\in\bbZ/(5\bbZ)$, can be expressed by
\begin{equation}
 T_{01}=T_{v_{01}},\quad
 T_{12}=T_{v_{12}},\quad
 T_{23}=T_{v_{23}},\quad
 T_{34}=T_{v_{34}},\quad
 T_{40}=T_{v_{40}},
\end{equation}
and are called the fundamental translations in $\tW(A_4^{(1)})$,
that is, all translations in $\tW(A_4^{(1)})$ can be expressed as the compositions of these translations.
Note that the fundamental translations in $W(A_4^{(1)})$ can be expressed as the compositions of the fundamental translations in $\tW(A_4^{(1)})$ as the following:
\begin{equation}
 T_i=T_{i\, i+1}\circ {T_{i-1,i}}^{-1},\quad i\in\bbZ/(5\bbZ).
\end{equation}

For any $v=\sum_{i=0}^4c_i\al_i\in V$, where $c_i\in \bbR$, we define the squared length of $T_v$ by
\begin{equation}
 |T_v|^2:=\langle \oc v,v\rangle
 =(c_0-c_1)^2+(c_1-c_2)^2+(c_2-c_3)^2+(c_3-c_4)^2+(c_4-c_0)^2,
\end{equation}
where $\oc v=\sum_{i=0}^4c_i \oc\al_i$.
Note that we here extended the domain of the bilinear pairing $\langle\cdot,\cdot\rangle$ from $\oc Q\times P$ to $\oc{\overline{Q}}\times P$, where
$\oc {\overline{Q}}=\sum_{k=0}^4\bbR\oc\al_k$.
We can easily verify that the squared length of fundamental translations in $W(A_4^{(1)})$: $T_i$ is $2$, while
the squared length of fundamental translations in $\tW(A_4^{(1)})$: $T_{i\, i+1}$ is $4/5$.
\section{General Elliptic Difference Equations}
\label{appendix:general_elliptic}
In this section, we provide two generic elliptic difference equations. The first is Sakai's $A_0^{(1)}$-surface equation. 
This was re-expressed by Murata \cite{mu:04} as  follows:
\begin{equation}
 T^{(M)}:(f,g;t,b_1,\dots,b_8)\mapsto\left(\overline{f},\overline{g};t+\frac{\delta}{2},b_1,\dots,b_8\right),
\end{equation}
where $\overline f$ and $\overline g$ are given by
\begin{subequations}
\begin{align}
 &\det\Big(v(f,g),v_1,\ldots, v_8,v_c\Big)
 \det\Big(v(\overline f, g),\widecheck{v}_1,\dots,\widecheck{v}_8,\widecheck{v}_c\Big)\notag\\
 &\qquad= P_+\,(f-f_c)(\overline f -\overline{f_c})\,\prod_{i=1}^8(g-g_i),\\[0.5em]
 &\det\Big(v(g,\overline f),\widehat{u}_1,\dots, \widehat{u}_8,\widehat{u}_c\Big)\det\Big(v(\overline g, \overline f),\overline{u}_1,\ldots, \overline{u}_8,\overline{u}_c\Big)\notag\\
 &\qquad=\overline{P}_-\,(g-g_c)(\overline g -\overline{g}_c)\,\prod_{i=1}^8(\overline f-\overline{f}_i).
\end{align}
\end{subequations}
Here,
\begin{equation}
 \delta=\sum_{k=1}^8 b_k,\quad
 v_i=v(f_i, g_i),\quad
 \widecheck{v}_i=v(\overline{f}_i, g_i),\quad 
 \widehat{u}_i=v(g_i, \overline{f}_i),\quad \overline{u}_i=v(\overline{g}_i,\overline{f}_i),
\end{equation}
for i=1,\dots,8,c, and
\begin{equation}
\begin{split}
 &f_i=\wp(t-b_i),\quad g_i=\wp(t+b_i),\quad i=1,\dots,8,\\
 &f_c=\wp\left(t+\frac{t^2}{\delta}\right),\quad
 g_c=\wp\left(t-\frac{t^2}{\delta}\right).
\end{split}
\end{equation}
Moreover, $v(a,b)$ and $P_\pm$ are given by
\begin{align*}
&\hspace*{-2em}v(a,b)=(ab^4, ab^3, ab^2,ab,a,b^4,b^3,b^2,b,1)^T,\\
&\hspace*{-2em}P_\pm=\frac{\sigma(4t)^4\sigma(4t\pm \delta)^4}{\sigma\left(t\mp \frac{t^2}{\delta}\right)^{16}}
\prod_{1\le i<j\le 8}\sigma(b_i-b_j)^2
\prod_{i=1}^8\frac{\sigma\left(\frac{t^2}{\delta}-b_i\right)\sigma\left(2 t\pm \frac{t^2}{\delta}\pm b_i\right)}{\sigma(t\pm b_i)^{14}\sigma(t\mp b_i)^{2}\sigma\left(t\mp b_i\pm\frac{\delta}{2}\right)^{2}},
\end{align*}
where $\sigma$ is the Weierstrass sigma function; see \cite[Chapter 23]{NIST:DLMF}.
Note that $\overline{P}_-=P_-|_{t\to\overline{t}}$.
The above system was obtained by deducing translations on the lattice of type $E_8^{(1)}$. 
We note that this translation corresponds to NVs in the lattice.

The second case of an elliptic difference equation was found by Joshi and Nakazono \cite{JN2017:MR3673467}. 
It has a projective reduction to the RCG equation \eqref{eqn:RCGeqn}. 
The generic equation is given by
\begin{align}
 T^{(JN)}:&(x,y;c_1,\dots,c_4,c_5,\dots,c_8,\eta)\notag\\
 &\mapsto(\ox,\oy;c_1-\la,\dots,c_4-\la,c_5+\la,\dots,c_8+\la,\eta+\la),
\end{align}
where $\ox$ and $\oy$ are given by
\begin{subequations}
\begin{align}
 &\bfrac{k\,\cd{\eta-\cc8+\ka}\oy+1}{k\,\cd{\eta-\cc7+\ka}\oy+1}
 \bfrac{\tx-\cd{\eta-\cc7+\frac{\cc{5678}}{2}+\la+\ka}}{\tx-\cd{\eta-\cc8+\frac{\cc{5678}}{2}+\la+\ka}}\notag\\
 &\qquad =\gG{\frac{\cc{5678}-2\cc5+\la}{2},\frac{\cc{5678}-2\cc6+\la}{2},\frac{\cc{5678}-2\cc7+\la}{2},\frac{\cc{5678}-2\cc8+\la}{2},\eta+\frac{\la}{2}+\ka}\notag\\
 &\qquad\qquad \frac{\pP{\frac{\cc{5678}-2\cc5+\la}{2},\frac{\cc{5678}-2\cc6+\la}{2},\frac{\cc{5678}-2\cc7+\la}{2},\eta+\frac{\la}{2}+\ka}{\tx,\ty}}
 {\pP{\frac{\cc{5678}-2\cc5+\la}{2},\frac{\cc{5678}-2\cc6+\la}{2},\frac{\cc{5678}-2\cc8+\la}{2},\eta+\frac{\la}{2}+\ka}{\tx,\ty}},\\
 &\bfrac{k\,\cd{\eta+\cc4+\ka}\ox+1}{k\,\cd{\eta+\cc3+\ka}\ox+1}
 \bfrac{k\,\cd{\eta-\cc3+2\la+\ka}\oy+1}{k\,\cd{\eta-\cc4+2\la+\ka}\oy+1}\notag\\
 &\qquad=\gG{\eta-\cc1+\frac{\cc{1234}}{4}+\la,\eta-\cc2+\frac{\cc{1234}}{4}+\la,\eta-\cc3+\frac{\cc{1234}}{4}+\la,\eta-\cc4+\frac{\cc{1234}}{4}+\la,\frac{\cc{5678}+2\la}{4}+\ka}\notag\\
 &\qquad\qquad \frac{\pP{\eta-\cc1+\frac{\cc{1234}}{4}+\la,\eta-\cc2+\frac{\cc{1234}}{4}+\la,\eta-\cc3+\frac{\cc{1234}}{4}+\la,\frac{\cc{5678}+2\la}{4}+\ka}{\frac{-1}{k\oy},\tx}}
 {\pP{\eta-\cc1+\frac{\cc{1234}}{4}+\la,\eta-\cc2+\frac{\cc{1234}}{4}+\la,\eta-\cc4+\frac{\cc{1234}}{4}+\la,\frac{\cc{5678}+2\la}{4}+\ka}{\frac{-1}{k\oy},\tx}},
\end{align}
and $\tx$ and $\ty$ are given by
\begin{align}
 &\bfrac{k\,\cd{\eta+\cc8-\frac{\cc{5678}}{2}}\ty+1}{k\,\cd{\eta+\cc7-\frac{\cc{5678}}{2}}\ty+1}
 \bfrac{x-\cd{\eta+\cc7}}{x-\cd{\eta+\cc8}}\notag\\
 &\qquad=\gG{\cc5,\cc6,\cc7,\cc8,\eta}
 \frac{\pP{\cc5,\cc6,\cc7,\eta}{x,y}}{\pP{\cc5,\cc6,\cc8,\eta}{x,y}},\\
 &\bfrac{k\,\cd{\eta-\cc4+\frac{\cc{1234}}{2}}\tx+1}{k\,\cd{\eta-\cc3+\frac{\cc{1234}}{2}}\tx+1}
 \bfrac{k\,\cd{\eta+\cc3+\frac{\cc{5678}}{2}}\ty+1}{k\,\cd{\eta+\cc4+\frac{\cc{5678}}{2}}\ty+1}\notag\\
 &\qquad=\gG{\eta+\cc1+\frac{\cc{5678}}{4},\eta+\cc2+\frac{\cc{5678}}{4},\eta+\cc3+\frac{\cc{5678}}{4},\eta+\cc4+\frac{\cc{5678}}{4},\frac{\cc{5678}}{4}}\notag\\
 &\qquad\qquad\frac{\pP{\eta+\cc1+\frac{\cc{5678}}{4},\eta+\cc2+\frac{\cc{5678}}{4},\eta+\cc3+\frac{\cc{5678}}{4},\frac{\cc{5678}}{4}}{\frac{-1}{k\ty},x}}
 {\pP{\eta+\cc1+\frac{\cc{5678}}{4},\eta+\cc2+\frac{\cc{5678}}{4},\eta+\cc4+\frac{\cc{5678}}{4},\frac{\cc{5678}}{4}}{\frac{-1}{k\ty},x}}.
\end{align}
\end{subequations}
Here, 
$\la=\sum_{k=1}^8\cc{k}$,
$\ka$ is defined by \eqref{eqn:kappa},
$\cc{j_1\cdots j_n}$ is the summation of the parameters $\cc{i}$ (see \eqref{notation:c}),
and the functions 
$\gG{\aaa1,\aaa2,\aaa3,\aaa4,b}$,
$\pQ{\aaa1,\aaa2,\aaa3,\aaa4,\aaa5,b}{X}$ and 
$\pP{\aaa1,\aaa2,\aaa3,b}{X,Y}$
are given by
{\allowdisplaybreaks
\begin{align}
 &\hspace{-2.5em}\gG{\aaa1,\aaa2,\aaa3,\aaa4,b}
 =\bcfrac{1-\frac{\cd{\aaa4+\frac{\aaa1+\aaa2}{2}}}{\cd{\aaa2+\frac{\aaa1+\aaa2}{2}}}}
 {1-\frac{\cd{\aaa3+\frac{\aaa1+\aaa2}{2}}}{\cd{\aaa2+\frac{\aaa1+\aaa2}{2}}}}
 \bcfrac{1-\frac{\cd{b-\aaa4}}{\cd{b-\aaa1}}}{1-\frac{\cd{b-\aaa3}}{\cd{b-\aaa1}}}\notag\\
 &\hspace{-2.5em}\hspace{6.8em}\bcfrac{1-\frac{\cd{b+\aaa4-\frac{\aaa1+\aaa2+\aaa3+\aaa4}{2}}}{\cd{b+\aaa2+\frac{\aaa1+\aaa2+\aaa3+\aaa4}{2}}}}
 {1-\frac{\cd{b+\aaa3-\frac{\aaa1+\aaa2+\aaa3+\aaa4}{2}}}{\cd{b+\aaa2+\frac{\aaa1+\aaa2+\aaa3+\aaa4}{2}}}}
 \bcfrac{1-\frac{\cd{\aaa3+\frac{\aaa1+\aaa2}{2}}}{\cd{2b+\aaa2-\frac{\aaa1+\aaa2}{2}}}}
 {1-\frac{\cd{\aaa4+\frac{\aaa1+\aaa2}{2}}}{\cd{2b+\aaa2-\frac{\aaa1+\aaa2}{2}}}},\\
 &\hspace{-2.5em}\pQ{\aaa1,\aaa2,\aaa3,\aaa4,\aaa5,b}{X}\notag\\
 &\hspace{-2.5em}=\left(\cd{b+\aaa3-\frac{\aaa5}{2}}-\cd{b+\aaa2+\frac{\aaa5}{2}}\right)
 \left(\cd{b+\aaa1+\frac{\aaa5}{2}}-\cd{b+\aaa4+\frac{\aaa5}{2}}\right)\notag\\
 &\hspace{-2.5em}\quad\Big(\cd{b+\aaa4}\cd{b+\aaa1}+\cd{b+\aaa2}X\Big)
 +\left(\cd{b+\aaa3-\frac{\aaa5}{2}}-\cd{b+\aaa1+\frac{\aaa5}{2}}\right)\notag\\
 &\hspace{-2.5em}\quad\left(\cd{b+\aaa4+\frac{\aaa5}{2}}-\cd{b+\aaa2+\frac{\aaa5}{2}}\right)
 \Big(\cd{b+\aaa4}\cd{b+\aaa2}+\cd{b+\aaa1}X\Big)\notag\\
 &\hspace{-2.5em}-\left(\cd{b+\aaa3-\frac{\aaa5}{2}}-\cd{b+\aaa4+\frac{\aaa5}{2}}\right)
 \left(\cd{b+\aaa1+\frac{\aaa5}{2}}-\cd{b+\aaa2+\frac{\aaa5}{2}}\right)\notag\\
 &\hspace{-2.5em}\quad\Big(\cd{b+\aaa1}\cd{b+\aaa2}+\cd{b+\aaa4}X\Big),\\
 &\hspace{-2.5em}\pP{\aaa1,\aaa2,\aaa3,b}{X,Y}=C_1 X Y+C_2 X+C_3 Y+C_4,
\end{align}
where
\begin{align*}
 &\hspace*{-2.5em}C_1=\Big(\cd{b-\aaa3}-\cd{b-\aaa2}\Big)\cd{b+\aaa1}
 +\Big(\cd{b-\aaa1}-\cd{b-\aaa3}\Big)\cd{b+\aaa2}\notag\\
 &+\Big(\cd{b-\aaa2}-\cd{b-\aaa1}\Big)\cd{b+\aaa3},\\
 &\hspace*{-2.5em}C_2=\Big(\cd{b-\aaa2}-\cd{b-\aaa3}\Big)\cd{b-\aaa1}\cd{b+\aaa1}\notag\\
 &+\Big(\cd{b-\aaa3}-\cd{b-\aaa1}\Big)\cd{b-\aaa2}\cd{b+\aaa2}\notag\\
 &+\Big(\cd{b-\aaa1}-\cd{b-\aaa2}\Big)\cd{b-\aaa3}\cd{b+\aaa3},\\
 &\hspace*{-2.5em}C_3=\Big(\cd{b+\aaa3}-\cd{b+\aaa2}\Big)\cd{b-\aaa1}\cd{b+\aaa1}\notag\\
 &+\Big(\cd{b+\aaa1}-\cd{b+\aaa3}\Big)\cd{b-\aaa2}\cd{b+\aaa2}\notag\\
 &+\Big(\cd{b+\aaa2}-\cd{b+\aaa1}\Big)\cd{b-\aaa3}\cd{b+\aaa3},\\
 &\hspace*{-2.5em}C_4=\Big(\cd{b+\aaa2}\cd{b-\aaa3}-\cd{b-\aaa2}\cd{b+\aaa3}\Big)\cd{b-\aaa1}\cd{b+\aaa1}\notag\\
 &+\Big(\cd{b+\aaa3}\cd{b-\aaa1}-\cd{b-\aaa3}\cd{b+\aaa1}\Big)\cd{b-\aaa2}\cd{b+\aaa2}\notag\\
 &+\Big(\cd{b+\aaa1}\cd{b-\aaa2}-\cd{b-\aaa1}\cd{b+\aaa2}\Big)\cd{b-\aaa3}\cd{b+\aaa3}.
\end{align*}
}
Similarly to the earlier case of Sakai, this equation also arises from a translation on the lattice of type $E_8^{(1)}$. 
We note that this translation corresponds to NNVs in the lattice.

\def\cprime{$'$} \def\cprime{$'$}

\end{document}